\documentclass{amsart}

\usepackage{amsmath,amsthm,amssymb}
\usepackage{comment} 
\usepackage{xr}
\usepackage{multicol}
\usepackage{mdframed}
\usepackage{blkarray}
\usepackage{mathrsfs}
\usepackage{placeins}
\usepackage[table,usenames,dvipsnames]{xcolor}  
\usepackage{tikz}         
\usetikzlibrary{calc}    
\usepackage[most]{tcolorbox}

\usepackage[english]{babel}
\DeclareMathAlphabet{\mathpzc}{OT1}{pzc}{m}{it}

\DeclareMathOperator{\ass}{ass}
\setcounter{MaxMatrixCols}{11}

\makeatletter
\def\moverlay{\mathpalette\mov@rlay}
\def\mov@rlay#1#2{\leavevmode\vtop{%
    \baselineskip\z@skip \lineskiplimit-\maxdimen
    \ialign{\hfil$\m@th#1##$\hfil\cr#2\crcr}}}
\newcommand{\charfusion}[3][\mathord]{
  #1{\ifx#1\mathop\vphantom{#2}\fi
    \mathpalette\mov@rlay{#2\cr#3}
  }
  \ifx#1\mathop\expandafter\displaylimits\fi}
\DeclareRobustCommand\bigop[1]{%
  \mathop{\vphantom{\sum}\mathpalette\bigop@{#1}}\slimits@
}
\newcommand{\bigop@}[2]{%
  \vcenter{%
    \sbox\z@{$#1\sum$}%
    \hbox{\resizebox{\ifx#1\displaystyle.9\fi\dimexpr\ht\z@+\dp\z@}{!}{$\m@th#2$}}%
  }%
}
\makeatother

\newtcolorbox{BoxDM}[2][]{
  lower separated=false,
  colback=white,
  boxrule=0.4pt,     
  colframe=black,fonttitle=\bfseries,
  colbacktitle=black,
  coltitle=white,
  enhanced,
  attach boxed title to top left={yshift=-0.1in,xshift=0.15in},
  boxed title style={boxrule=0pt,colframe=white,},
  title=#2,#1}

\newcommand{\cupdot}{\charfusion[\mathbin]{\cup}{\cdot}}

\newcommand{\coloneqq}{\mathrel{:=}} 


\newtheorem{lemma}{\bf Lemma}[section]
\newtheorem{proposition}{\bf Proposition}[section]

\newtheorem{definition}{\bf Definition}[section]



\begin{document}

\title{Assembly in Directed Hypergraphs}

\author{
Christoph Flamm$^{1}$, Daniel Merkle$^{2,3}$ and Peter F.\ Stadler$^{4-7}$}


\address{
  $^{1}${Department of Theoretical Chemistry, University of Vienna,
    Austria}\\
  $^{2}${Algorithmic Cheminformatics Group, Faculty of Technology;
    Center for Biotechnology (CeBiTec), Faculty of Biology,
    Bielefeld University, Bielefeld}\\
  $^{3}${Dept. of Mathematics and Computer Science,
    University of Southern Denmark, Odense M, Denmark}\\ 
  $^{4}${Max Planck Institute for Mathematics in the Sciences, Leipzig,
    Germany}\\
  $^{5}${Bioinformatics Groups, Department of Computer Science;
    Competence Center for Scalable Data Services and Solutions
    Dresden/Leipzig (scaDS.AI); German Centre for Integrative Biodiversity
    Research (iDiv); Leipzig Research Center for Civilization Diseases;
    School of Embedded and
    Compositive Artificial Intelligence, Universit{\"a}t Leipzig, Germany}\\
  $^{6}${Facultad de Ciencias, Universidad National de Colombia,
    Sede Bogot{\'a}, Colombia}\\
  $^{7}${Santa Fe Institute, Santa Fe, NM}
}


\keywords{assembly index; retrosynthesis, directed hypergraphs,
  B-hypergraphs, hyperpath problems, graph rewriting}


\begin{abstract}
  Assembly theory has received considerable attention in the recent past.
  Here we analyze the formal framework of this model and show that assembly
  pathways coincide with certain minimal hyperpaths in B-hypergraphs. This
  makes it possible to generalize the notion of assembly to general
  chemical reaction systems and to make explicit the connection to rule
  based models of chemistry, in particular DPO graph rewriting. We observe,
  furthermore, that assembly theory is closely related to retrosynthetic
  analysis in chemistry. The assembly index fits seamlessly into a large
  family of cost measures for directed hyperpath problems that also
  encompasses cost functions used in computational synthesis planning. This
  allows to devise a generic approach to compute complexity measures
  derived from minimal hyperpaths in rule-derived directed hypergraphs
  using integer linear programming.
\end{abstract}

\maketitle


\section{Introduction}

Assembly theory models the combinatorial generation of innovation in
rule-based, abstract universes of recombining objects, where it seeks to
quantify the intrinsic complexity of composite objects in terms of the
minimum number of steps needed for its assembly or construction from basic
building blocks \cite{Marshall:17,Marshall:21,Liu:21,Marshall:22}. As
summarized in \cite{Jaeger:24}, it combines a rule-based system with an
abstracted temporal dimension in the form of recursivity, i.e., by
allowing, in each step of the assembly process, the reuse of all previously
assembled objects.

The theory of constructive dynamical systems \cite{Banzhaf:04,Dittrich:07}
was pioneered already three decades ago in the AlChemy model
\cite{Fontana:94b,Fontana:94c}. In AlChemy, the $\lambda$-calculus is used
to capture the constructive feature of chemistry and the fact that many
different reactants can yield the same stable product. The objects are
$\lambda$-expression that ``react'' with each other by application and
subsequent reduction to normal form. Intended as an abstract model for the
emergence of biological organizations, the model was primarily used to
demonstrate the emergence of system-level features such as
self-maintenance. AlChemy also formed the conceptual basis for a graph-based
artificial chemistry in which chemical reactions are represented by the
application of graph-rewriting rules \cite{Benkoe:03b}.  Chemical reactions
are then very naturally represented as edges in a directed hypergraph
\cite{Zeigarnik:00}. The software package \texttt{M{\O}D}
\cite{Andersen:16a} was developed as an efficient implementation of this
idea, combining rule based DPO graph rewriting \cite{Andersen:13a}, a
framework of strategies from exploration \cite{Andersen:14a}, and
facilities to analyze reaction networks in terms integer hyperflows
\cite{Andersen:19a}.

The notion of assembly in the sense of \cite{Marshall:21,Sharma:23} is
closely related to the problems in chemical synthesis planning (see
\cite{Wang:21} for a recent review), introduced as a computational problem
already half a century ago in the form of retrosynthetic analysis
\cite{Corey:69}, see also \cite{Corey:75,Hendrickson:81}.  Starting from
the target, chemical structures are recursively split into smaller
molecules until sufficiently simple or commercially available starting
materials have been found. The choice of the chemical bonds to be split are
usually chosen to mimic the choices of chemists. Alternatively, descriptors
such as the graph theoretical molecular complexity of the resulting
precursor can be minimized \cite{Bertz:93}. The cost of synthesizing a
given quantity of the target compound then may serve as a measure of the
synthetic difficulty of the target.

A formal exposition of assembly in the sense of
\cite{Marshall:17,Marshall:21,Liu:21,Sharma:23} is given in terms of
certain edge-labeled multigraphs \cite{Marshall:22}.  In
section~\ref{ssect:ass2Bhyp} we shall see that it is not difficult to
translate the construction to the framework of directed hypergraphs. The
notion of assembly pathyways coincides with shortest B-hyperpaths (in the
sense of \cite{Ausiello:05}). This makes it possible to embed the
constructions of assembly theory, such as assembly pathways, as a special
case into the framework of constructive models as representations of
chemistry. Interestingly, the same hypergraph constructions have also been
used in the context of assembly processes in fabrication \cite{Gallo:02}.

Assembly theory also has close connections to the theory of algorithmic
information theory \cite{Chaitin:66,Zvonkin:70}. The assembly index, in
fact, fits seamlessly with a diversity of alternative measures of
complexity listed in \cite{Lloyd:01}. Of particular relevance for our
purposes is grammar-based compression \cite{Kieffer:00}, which for a given
string $\xi$ seeks to construct a \emph{straight-line program} for $\xi$,
i.e., a deterministic context free grammar (CFG) $\mathfrak{G}_x$ that
generates a language consisting only of $\xi$ with as few derivation rules
as possible. Thm.~7 in \cite{Abrahao:24} shows that every minimal assembly
pathways can be expressed (using suitable string encodings of the objects)
by a CFG with derivation rules that correspond to the individual assembly
steps. Moreover, it expresses the assembly intermediates by corresponding
subsets of rules. Assembly, therefore, can also be regarded as a grammar
compressor with derivation rules defined by the underlying physical,
chemical, or biological processes. The Lempel-Ziv (LZ) family of
compression algorithms \cite{Ziv:78} are well-known grammar-based
compressors \cite{Rytter:04}. It is argued in \cite{Abrahao:24} that the
assembly index, which is difficult to compute in general, could (and
should) be replaced by the LZ compression length, since the latter is much
easier to compute. This approach, however, sacrifices the link to the
physical or chemical basis of the measure, such as sequences of chemical
reactions, which -- depending on the application -- may not be desirable.
Compression-based measures, however, have been shown in
\cite{Uthamacumaran:22}.to outperform assembly measures \cite{Marshall:17}
for the task of discriminating molecules of biotic and abiotic origin, see
also \cite{Hazen:24}.

Nevertheless, the concept of assembly is of key interest in many
applications, not despite but \emph{because} it is dependent on a specific
model of objects and their construction. In synthetic chemistry, for
example, it is of utmost practical importance how difficult it is to
synthesize a putative drug target making use only of commercially viable
reactions and starting materials. In this setting, an estimate of the
algorithmic complexity of the target, computed using some efficient string
compression algorithm from a string encoding of the molecular structure such
as canonical SMILE, is of very little use. Whether or not compression
length, the molecular assembly index (MA) \cite{Marshall:17,Jirasek:23}, or
another particular measure of complexity is useful, and which measure is
superior, necessarily remains dependent on the pertinent (research)
question.

Instead of advocating for a particular measure, we are concerned here with two
issues: (1) The formalism described in \cite{Marshall:22} does not fit well
with the mathematical framework of directed hypergraphs that is most
commonly utilized for studying chemical reaction networks. This also
concerns further extensions e.g.\ to Petri nets \cite{Andersen:23a}. In
Sect.~\ref{sect:assH}, we therefore start from a translation of assembly
theory to acyclic directed B-hypergraphs and show that the requirement of
acyclicity for the underlying space can be dropped by identifying assembly
pathways with minimal B-hyperpaths. The connection between assembly and
grammar-based compression emphasized in \cite{Abrahao:24} also persists in
this much more general setting. Finally, we show that assembly can be
studied in the general setting of directed hypergraphs and fits seamlessly
into the class of path problems on directed hypergraphs.

In Section~\ref{sect:comp} we consider computational approaches and show
that the rule-based aspect of assembly is captured in a very natural way by
graph transformation systems, and in particular by DPO graph rewriting. The
invertibility of DPO rules guarantees that assembly can be computationally
treated as the disassembly of the target object. The resulting procedure is
akin retrosynthetic analysis, which was introduced a theoretical foundation
of chemical synthesis planning already in the 1960s
\cite{Vleduts:63,Corey:67}. We then show that, given a directed hypergraph
produced by a generic transformation systems, it is possible to construct
optimal pathways by means of integer linear programming (ILP). The cost
functions amenable to this approach include the assembly index as well as
yield-based synthesis costs.

\section{Assembly in Hypergraphs} 
\label{sect:assH}

\subsection{From Assembly Spaces to Acyclic B-Hypergraphs}
\label{ssect:ass2Bhyp}

The notion of an \emph{assembly space} is introduced in \cite{Marshall:22}
starting from certain edge-labeled multigraphs. A similar construction is
used in \cite{Bieniawski:24,Lukaszyk:25}. While directed multigraphs, also
known as quivers \cite{Derksen:05}, are certainly appealing from the point
of view of category theory and algebra, they need to be endowed with an
additional edge labeling function in the context of assembly theory that
can be inconvenient two work with. In this section we briefly review the
original construction of assembly spaces as presented in \cite{Marshall:22}
and show that it can be replaced by a conceptually simpler representation
in terms of a restricted class of directed (multi-)hypergraphs. This not
only slightly simplifies some technicalities but also readily lends itself
to natural generalizations and connects assembly theory to a body of
earlier work on hypergraph algorithms.

\paragraph{Quivers, reachability, and assembly spaces.}
A directed multigraph (or quiver) $\Gamma=(V,Q,h^-,h^+)$ consists of a (not
necessarily finite) vertex set $V$, an edge set $Q$, and maps $h^-: Q\to
V$ and $h^+: Q\to V$ identifying the tail $h^-(e)$ and head $h^+(e)$ of
every edge in $Q$. An edge $e\in Q$ is a loop if $h^+(e)=h^-(e)$. A path in
$\Gamma$ is a sequence of $(x_0,e_1,x_1,e_2,\dots,e_n,x_n)$ such that
$h^-(e_i)=x_{i-1}$ and $h^+(e_i)=x_{i}$ for $1\le i\le n$. The length of
the path is $n$, and we also consider $(x_0)$ to be a path (with length $0$).

For $x,y\in V$ we say $y$ is reachable from $x$, in symbols $x\le y$, if
there is a path from $x$ to $y$. Note that we have $x\le x$ for all $x\in
V$. Since paths can be concatenated, the reachability relation is
transitive, i.e., $x\le y$ and $y\le z$ implies $x\le z$.  A cycle in
$\Gamma$ is a path of length $n\ge 1$ such that $x_0=x_n$. Note that the
path $(x_0)$ is not a cycle, while a loop $(x_0,e_1,x_0)$ counts as a
cycle. A directed multigraph is \emph{acyclic} if there are no
cycles. Acyclic directed multigraphs in particular do not contain loops.
The relation $\le$ is antisymmetric, $x\le y$ and $y\le x$ implies $x=y$,
if and only if there are no cycles with length larger than $1$. In particular,
therefore, $\le$ is a partial order on $V$ for acyclic directed
multigraphs.

A vertex $x\in V$ is minimal if there is no $z\in V\setminus\{x\}$ such
$z\le x$. Equivalently, $x$ is minimal if, for all $z\in V$, $h^+(z)=x$
implies $z=x$. We write $\min V \coloneqq \{y\in V| z\le y \implies z=y\}$
and $\max V \coloneqq \{y\in V| y\le z \implies z=y\}$ for the sets of
minimal and maximal vertices in $\Gamma$, respectively. We say that
$\Gamma$ is \emph{grounded} if for every $x\in V$ there is a $z\in\min V$
such that $z\le x$.

Using the language introduced in the paragraphs above, an \emph{assembly
space} is defined in \cite{Marshall:22}, as a grounded, acyclic directed
multigraph with finite basis set $\min V$, endowed with an additional edge
labeling function $\varphi:Q\to V$ such that for every edge $a$ from $x$ to
$z$ with $\varphi(a)=y$ there is an edge $b$ from $y$ to $z$ with
$\varphi(b)=x$.

\paragraph{Directed hypergraphs.}
In the following we use the notion of directed hypergraphs introduced in
\cite{Gallo:93}. We will, however, consider edges as pairs of multisets
instead of sets to accommodate multiplicities. For multisets we abuse set
notation and write $x\in A$ to mean that $x$ appears in $A$ with
multiplicity at least $1$. We also write $|A|$ for the sum of the
multiplicities of the $x\in A$ and $A\subseteq B$ to mean that every
element of $A$ is contained in $B$ with at least the same multiplicity.

A directed (multi)hypergraph $\mathscr{H}$ consists of a set $V$ of
vertices and (multi)set $\mathcal{E}$ of directed hyperedges of the form
$(E^-,E^+)$ where $E^-$ and $E^+$ are (multi)sets comprising elements of
$V$. A hyperpath in $\mathscr{H}$ is a sequence
$(x_0,E_1,x_1,\dots,E_n,x_n)$ such that $x_{i-1}\in E_{i}^-$ and
$x_{i}\in E_{i}^+$ for $1\le i\le n$.  A hyperpath is cyclic if
$x_n\in E_1^-$. A directed hypergraph without cyclic hyperpath is acyclic.
A subhypergraph $\mathscr{H}'=(V',\mathcal{E}')$ of
$\mathscr{H}=(V,\mathcal{E})$ is a hypergraph such that $V'\subseteq V$ and
$\mathcal{E'}\subseteq\mathcal{E}$. 

\begin{figure}
  \begin{tabular}{ccc}
  \begin{minipage}{0.5\textwidth}
    \includegraphics[width=\textwidth]{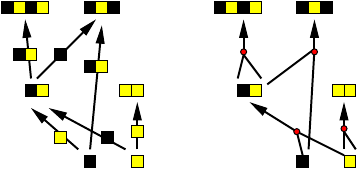} 
  \end{minipage} &&
  \begin{minipage}{0.5\textwidth}
    \caption{Correspondence of assembly spaces as defined in
      \cite{Marshall:22} as labeled multigraphs and as B-hypergraphs with
      tails of cardinality $2$. The latter are drawn in their bipartite
      K{\"o}nig representation, with red circles denoting the hyperedges.}
  \end{minipage}
  \end{tabular}
\end{figure}

Denote by $E_{xy:z}$ the set of all edges $e$ in the assembly space from
$x$ to $z$ with label $\varphi(e)=y$ and all edges $f$ from $y$ to $z$ with
$\varphi(e)=x$. By definition of $\varphi$, $E_{xy:z}$ contain at least one
edge from $x$ to $z$ and one edge from $y$ to $z$. Note that
$E_{xy:z}=E_{yx:z}$ and $E_{xx:z}$ contains exactly the edges $e$ from $x$
to $z$ with $\varphi(e)=x$. Since acyclicity implies that there are no
loops, we have $E_{xy:x}=\emptyset$. Finally,
$E_{xy:z}\cap E_{x'y':z'}=\emptyset$ unless $\{x,y\}=\{x',y\}$ and $z=z'$.

Consider the hypergraph $\mathscr{H}=(V,\mathcal{E})$ with vertex set $V$
and hyperedges $(\{x,y\},\{z\})\in\mathcal{E}$ iff and only if
$E_{xy:z}\ne\emptyset$. To account for the fact $E_{xx:z}$ is meant to
account for assembly of $z$ by merging two copies of $x$, we represent the
corresponding hyperedge by $(\{x,x\},\{z\})$ instead of $(\{x\},\{z\})$.
By construction, $E_{xy:z}\ne\emptyset$ if and only if there is an edge
from $x$ to $z$ in the assembly space.

A directed (multi)hypergraph $\mathscr{H}$ is a B-hypergraph if its
directed edges $(E^-,E^+)$ satisfy $|E^+|=1$, i.e., the heads of all
hyperedges are single vertices \cite{Gallo:93}. The edge-labeled
multi-hypergraphs of \cite{Marshall:22} thus give rise to B-hypergraphs
with $|E^-|=2$ for all edges. Conversely, we can take any B-hypergraphs on
$V$ with $|E^-|=2$ and construct an edge-labeled directed multigraph by
inserting for $(\{x,y\},\{z\})$ the edges $e$ between $x$ and $z$ with
label $\varphi(e)=y$ and $f$ between $y$ and $z$ with label $\varphi(e)=x$
provided $x\ne y$, and a single edge $e$ between $x$ and $z$ with label
$\varphi(e)=x$. Clearly this satisfied the labeling condition required for
assembly spaces in \cite{Marshall:22}.

Thus, for all $x,y\in V$ we have $x\le y$ (in the assembly space) if either
$x=y$ or there is a hyperpath from $x$ to $z$ in the B-hypergraph
$\mathscr{H}$. Thus reachability in $\mathscr{H}$ is the same as
reachability in the edge-labeled multigraph. Moreover, the edge labeled
multigraph is acyclic if and only if the corresponding directed hypergraph does
not contain a cyclic path and thus is acyclic. The conditions for being
grounded and having a finite set $\min V$ are identical in both
construction by virtue of having the same reachability relation $\le$ on
$V$. Finally it follows immediately from the definitions that assembly
subspaces correspond to subhypergraphs and \emph{vice versa}. The
discussion in this section thus can be summarized as follows: 
\begin{proposition}
  There is a 1-1 correspondence preserving groundedness and reachability
  between assembly spaces in the sense of \cite{Marshall:22} and acyclic
  subhypergraphs of B-hypergraphs satisfying $|E^-|=2$ for all hyperedges.
\end{proposition}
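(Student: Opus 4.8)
The plan is to turn the discussion preceding the statement into two explicit maps and to check that they are mutually inverse and respect the structure in question. First I would fix notation for the forward map $\Phi$: given an assembly space $(\Gamma,\varphi)$ with $\Gamma=(V,Q,h^-,h^+)$, put $\Phi(\Gamma,\varphi)\coloneqq\mathscr{H}=(V,\mathcal{E})$ where $(\{x,y\},\{z\})\in\mathcal{E}$ exactly when $E_{xy:z}\neq\emptyset$, using the convention $(\{x,x\},\{z\})$ when $x=y$. The labeling axiom for $\varphi$ makes this well defined: $E_{xy:z}\neq\emptyset$ holds iff there is an edge $x\to z$ labeled $y$ iff there is an edge $y\to z$ labeled $x$, so $\mathscr{H}$ is a B-hypergraph with $|E^-|=2$ for every edge. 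Acyclicity of $\Gamma$ forbids loops, hence forces $x\neq z$ in every hyperedge, and (as recorded below) rules out cyclic hyperpaths altogether.

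Next I would define the backward map $\Psi$ from a B-hypergraph $\mathscr{H}=(V,\mathcal{E})$ with $|E^-|=2$: insert, for each $(\{x,y\},\{z\})\in\mathcal{E}$ with $x\neq y$, one edge $x\to z$ labeled $y$ and one edge $y\to z$ labeled $x$, and for each $(\{x,x\},\{z\})\in\mathcal{E}$ a single edge $x\to z$ labeled $x$. As already noted, this satisfies the labeling condition, so $\Psi(\mathscr{H})$ is an assembly space. The equality $\Phi\circ\Psi=\mathrm{id}$ is immediate from the construction, since $E_{xy:z}\neq\emptyset$ in $\Psi(\mathscr{H})$ iff $(\{x,y\},\{z\})\in\mathcal{E}$. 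The composition $\Psi\circ\Phi$ returns $(\Gamma,\varphi)$ provided $\Gamma$ carries at most one edge $x\to z$ with any fixed label. The hard part will be precisely this bookkeeping subtlety: Marshall's multigraph formalism a priori allows parallel, identically labeled edges that carry no extra assembly-theoretic content, and the cleanest way around it is to phrase the bijection for reduced assembly spaces (equivalently, to observe that the theory only uses the sets $E_{xy:z}$, not their multiplicities), after which $\Phi$ and $\Psi$ are genuine inverses.

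Finally I would transfer the structural properties, invoking the observations already made in this section. A path $(x_0,e_1,x_1,\dots,e_n,x_n)$ in $\Gamma$ with $\varphi(e_i)=y_i$ corresponds edge by edge to the hyperpath $(x_0,(\{x_0,y_1\},\{x_1\}),x_1,\dots,x_n)$ in $\Phi(\Gamma,\varphi)$ and conversely, so reachability --- and hence the partial order $\le$ --- agrees on both sides. Consequently $\min V$ is the same set in both pictures, so one side has a finite basis iff the other does, and $\Gamma$ is grounded iff $\Phi(\Gamma,\varphi)$ is. Acyclicity matches because a cycle in $\Gamma$ is exactly a cyclic hyperpath in $\mathscr{H}$ and vice versa. (If one also wants the subspace statement, $\mathcal{E}'\subseteq\mathcal{E}$ corresponds under $\Psi$ to deleting the inserted edges of the omitted hyperedges, so assembly subspaces and subhypergraphs correspond.) This completes the plan.
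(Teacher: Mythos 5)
Your proposal follows the paper's own argument essentially verbatim: the paper proves this proposition via the discussion immediately preceding it, with the same forward map through the sets $E_{xy:z}$, the same backward insertion of the labeled edge pairs, and the same transfer of reachability, groundedness, and acyclicity. Your extra remark about parallel identically-labeled edges is a genuine bookkeeping point that the paper silently glosses over, and your fix (phrasing the bijection for reduced assembly spaces, i.e., working with the sets $E_{xy:z}$ rather than their multiplicities) is the right way to make the stated ``1-1 correspondence'' literally true.
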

This result not only translates the notion of assembly spaces to the more
convenient framework of directed hypergraphs but also immediately suggests
natural generalizations by relaxing the conditions on the directed
hypergraphs.

\subsection{Assembly on B-Hypergraphs}

In this section we drop the restriction to B-hypergraphs. The definition of
reachability remains unchanged. For later reference:
\begin{definition}
  Let $\mathscr{H}=(V,\mathcal{E})$ be a directed (multi)hypergraph.  Then
  $x$ is reachable from $y$, in symbols $y\le x$ if either $x=y$ or $x\ne
  y$ and there is a path $(y,E_1,x_1,\dots,E_n,x)$ of length $n\ge1$.
\end{definition}
Reachability, which forms the basis of the formal developments below, is
independent of the multiplicity of vertices in the hyperedges. To simplify
the language below we will therefore simply speak of directed hypergraphs.

The fact that $\le$ is a partial order if and only if $\mathscr{H}$ is
acyclic remains true for directed hypergraphs in general.  If $\mathscr{H}$
is acyclic then every subhypergraph $\mathscr{H}'$ of $\mathscr{H}$ is also
acyclic because the reachability relation $\le'$ on $\mathscr{H}'$ is a
subset of $\le$.  A subhypergraph $\mathscr{H}'$ of $\mathscr{H}$ in
general does not inherit the reachability relation from $\mathscr{H}$:
While for all $x,y\in V'$ we have $x\le' y$ implies $x\le y$ in
$\mathscr{H}$, the converse is not true. Note that the property of being
grounded depends on the reachability relation $\le'$ within $\mathscr{H}'$,
not on the reachability relation $\le$ of the parent hypergraph
$\mathscr{H}$.  Every \emph{finite} subhypergraph $\mathscr{H}'$ of
$\mathscr{H}$ is necessarily grounded.

In analogy with the definition in \cite{Marshall:22} we say that a
subhypergraph $\mathscr{H}'$ is \emph{rooted in $\mathscr{H}$} if
$\mathscr{H}'$ is a grounded subhypergraph of $\mathscr{H}$ and $\min
V'\subseteq \min V$.

\begin{definition}
  Let $\mathcal{H}=(V,\mathcal{E})$ be a directed hypergraph with
  reachability relation $\le$. For every $x\in V$ define
  \begin{equation*}
    \mathcal{E}_{\le x}\coloneqq
    \left\{ (E^-,E^+) \,\big|\, E\in\mathcal{E},\,
       \forall z\in E^-: z\le x,\, \text{and}\,
       \exists y\in E^+: y\le x \right\}
  \end{equation*}
  Moreover, we set
  $\displaystyle V_{\le x}\coloneqq \bigcup_{E\in\mathcal{E}_{\le x}}
  (E^-\cup E^+)$ and $\displaystyle V^-_{\le x}\coloneqq
  \{x\}\cup\bigcup_{E\in\mathcal{E}_{\le x}} E^-$.
\end{definition}
The set $\mathcal{E}_{\le x}$ thus contains all edges along paths that
eventually reach $x$. Since $V_{\le x}$ comprises all vertices in the edge
set $\mathcal{E}_{\le x}$, $\mathscr{H}_{\le}[x]\coloneqq (\mathcal{E}_{\le
  x},V_{\le x})$ is a subhypergraph rooted in $x$. In a B-hypergraph, the head
$E^+$ of every edges $E\in\mathcal{E}$ is a single vertex, and thus for
B-hypergraphs we have $V_{\le x}=V'_{\le x}$ for all $x\in V$.

The following result can bee seen as a proper generalization of
\cite[Lemma~5]{Marshall:22}:
\begin{lemma}
  Let $\mathscr{H}=(V,\mathcal{E})$ be a grounded acyclic directed
  hypergraph. Then $\mathscr{H}_{\le}[x]$ is a subhypergraph rooted in
  $\mathscr{H}$. Moreover, the reachability relations of $\mathscr{H}$ and
  $\mathscr{H}_{\le}[x]$ coincide on $V^-_{\le x})$.
\end{lemma}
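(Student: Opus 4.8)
The plan is to verify the two assertions of the lemma in turn: first that $\mathscr{H}_{\le}[x]$ is rooted in $\mathscr{H}$, and second that the reachability relations agree on $V^-_{\le x}$. For the first part, I would unpack the definition of ``rooted in $\mathscr{H}$'': I need to show that $\mathscr{H}_{\le}[x]$ is a grounded subhypergraph and that $\min V_{\le x} \subseteq \min V$. Groundedness is the crux here: for every $v \in V_{\le x}$ I must produce a vertex $z \in \min V_{\le x}$ with $z \le' v$, where $\le'$ is reachability inside $\mathscr{H}_{\le}[x]$. I would start from a vertex $v \in V_{\le x}$, use that $\mathscr{H}$ itself is grounded to obtain some $w \in \min V$ with $w \le v$ in $\mathscr{H}$, and then argue that the witnessing path can be taken to lie entirely inside $\mathscr{H}_{\le}[x]$ — this is where acyclicity and the definition of $\mathcal{E}_{\le x}$ do the work: every edge on a path from $w$ to $v$ that then can be continued to $x$ (using that $v \le x$, which itself needs a short argument from the fact that $v$ lies in some edge of $\mathcal{E}_{\le x}$) has all its tail vertices $\le x$ and a head vertex $\le x$, hence belongs to $\mathcal{E}_{\le x}$. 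Once $w$ is shown to be reachable inside $\mathscr{H}_{\le}[x]$, I note $w \in \min V$ forces $w \in \min V_{\le x}$ as well, since a strictly smaller vertex in $V_{\le x}$ would also be strictly smaller in $V$. A symmetric bookkeeping argument handles the inclusion $\min V_{\le x} \subseteq \min V$: a vertex minimal in $V_{\le x}$ but not in $V$ would have a predecessor in $V$, which can be routed into $\mathcal{E}_{\le x}$ by the same closure property, contradicting minimality in $V_{\le x}$.

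For the second assertion I would argue the nontrivial inclusion: if $y, z \in V^-_{\le x}$ and $y \le z$ in $\mathscr{H}$, then $y \le' z$ in $\mathscr{H}_{\le}[x]$ (the reverse implication is immediate from the general remark, quoted in the excerpt, that $\le'$ is always contained in $\le$). Take a path $(y, E_1, x_1, \dots, E_n, z)$ in $\mathscr{H}$. The key point is that each edge $E_i$ on this path belongs to $\mathcal{E}_{\le x}$: since $z \in V^-_{\le x}$, either $z = x$ or $z$ appears in the tail of some edge of $\mathcal{E}_{\le x}$ and hence $z \le x$; in either case $z \le x$, so every vertex $x_i$ on the path satisfies $x_i \le z \le x$ by transitivity, and similarly every tail vertex of every $E_i$ reaches $x_i \le x$ and thus reaches $x$. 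Therefore each $E_i \in \mathcal{E}_{\le x}$ and all the $x_i \in V_{\le x}$, so the same sequence is a path in $\mathscr{H}_{\le}[x]$, giving $y \le' z$. I would be careful to also check the degenerate case $y = z$ separately, where $y \le' y$ holds by definition.

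I expect the main obstacle to be the groundedness argument, specifically the claim that the witnessing path from a minimal vertex $w$ to $v$ can be chosen inside $\mathscr{H}_{\le}[x]$. The subtlety is that $v$ being \emph{a vertex} of some edge in $\mathcal{E}_{\le x}$ does not immediately place $v$ on a path that reaches $x$ in the strong sense needed — one has to use acyclicity to rule out that $v$ sits in the head of an edge whose tail vertices reach $x$ only via $v$ in a way that creates circular dependencies, and one has to carefully distinguish the head-vertex condition ``$\exists y \in E^+ : y \le x$'' from ``all of $E^+$ reaches $x$''. I would handle this by induction along the reachability order (available because $\mathscr{H}$ is acyclic, so $\le$ is a well-founded partial order on the relevant finite portion) or, more concretely, by taking a path realizing $v \le x$ in $\mathscr{H}$ and observing that its edges are forced into $\mathcal{E}_{\le x}$ exactly as in the second part of the proof, then prepending the grounding path for $v$. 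The bookkeeping with multisets versus sets is routine, since the remark in the excerpt already reduces everything to the multiplicity-free reachability relation.
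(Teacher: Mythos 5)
Your proposal is correct and follows essentially the same route as the paper's proof: the subhypergraph and minimality claims are read off the construction, and both the groundedness and the reachability-coincidence arguments hinge on the same key observation, namely that any path in $\mathscr{H}$ between vertices of $V^-_{\le x}$ extends to a path terminating in $x$, all of whose edges are then forced into $\mathcal{E}_{\le x}$ by transitivity of $\le$. The one subtlety you flag --- head vertices $v\in V_{\le x}$ with $v\not\le x$, for which ``take a path realizing $v\le x$'' does not apply --- is resolved more simply than by induction: ground any tail vertex of the witnessing edge (such vertices lie in $V^-_{\le x}$ and do satisfy $\le x$) and append that single edge; the paper's own proof is equally terse on exactly this point.
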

\begin{proof}
  By construction $z\in E^-,E^+$ implies $z\in V_{\le}$ for all
  $E\in\mathcal{E}_{\le}$ and thus $\mathscr{H}_{\le}[x]$ is a
  subhypergraph of $\mathscr{H}$.  Now consider $y\in V_{\le}$. Since
  $\mathscr{H}$ be a grounded, there is $z\in\min V$ with $z\le y$ and by
  construction $z\in V_{\le}$. By definition of $\min V$, there is no $u\in
  V\setminus\{z\}$ with $u\le z$, thus $z\in \min V_{\le}$.  Thus
  $\mathscr{H}_{\le}[x]$ is grounded. If $y\notin \min V$, then $z\ne y$,
  and thus $y\notin \min V_{\le}$. Therefore $\min V_{\le}\subseteq \min
  V$, and $\mathscr{H}_{\le}[x]$ is rooted in $\mathscr{H}$.

  Denote by $\le'$ the reachability relation on
  $\mathscr{H}_{\le}[x]$. Since $\mathscr{H}_{\le}[x]$ is a subhypergraph
  of $\mathscr{H}$, $u\le' v$ implies $u\le v$. Conversely, let $u,v\in
  V'_{\le x}$ and consider a path $\mathbf{p}$ from $u$ to $v$ in
  $\mathscr{H}$ and let $E=(E^-,E^+)$ be an edge along $\mathbf{p}$. Since
  we have either $v=x$ of $v\in E^-$ for some edge $E\in\mathcal{E}$ we
  have $v\le x$ and thus the path $\mathbf{p}$ can be extended to a path
  from $u\to x$; its edges are by construction contained in
  $\mathcal{E}_{\le}$ and thus in particular $\mathbf{p}$ is a path in
  $\mathscr{H}_{\le}[x]$. Therefore $u\le' v$ and the reachability
  relations $\le$ and $\le'$ thus coincide on $V'_{\le}$.
\end{proof}

We are now in the position to define assembly pathways in general acyclic
grounded hypergraphs. By virtue of the correspondence between assembly
spaces \textit{sensu} \cite{Marshall:22} and B-hypergraphs with $|E^-|=2$
for all hyperedges, these definitions are proper generalizations of the
constructions for assembly spaces.

\begin{definition}
  Let $\mathscr{H}=(V,\mathcal{E})$ be a grounded acyclic hypergraph and
  $x\in V$. Then an \emph{assembly pathway} for $x$ is a subhypergraph
  $\mathscr{P}_x=(W,\mathcal{F})$ with reachability relation $\le'$ on $V$
  such that (i) $\mathscr{P}_x$ is rooted in $\mathscr{H}$ and (ii) $z\le'
  x$ holds for all $z\in E^-$ with $(E^-,E^+)\in \mathcal{F}$.
\end{definition}
We immediately see that $\mathscr{P}_x$ is also a subhypergraph rooted in
$\mathscr{H}_{\le}[x]$. The second condition in this definition excludes
reactions from the pathway that have no product connected to the
target $x$ by a path $\mathscr{P}_x$.  

A closely related concept was defined in \cite{Ausiello:05} for
B-hypergraphs as follows: A B-hyperpath $\mathscr{Q}$ from $s$ to $t$ is a
subhypergraph of a B-hypergraph $\mathscr{H}$ such that (i) $\mathscr{Q}$
is minimal (w.r.t.\ deletion of vertices and hyperedges), (ii) there is a
linear order $E_1,E_2,\dots, E_n$ of the hyperedges of $\mathscr{Q}$ such
that
\begin{equation*}
  E_k^- \subseteq \{s\}\cup \bigcup_{j=1}^{k-1} E_j^+ \quad
  \text{for all} 
\end{equation*}
and $t\in E_n$. In B-hypergraphs, the minimality condition (i) implies that
$\mathscr{P}$ is acyclic: if there is a vertex $z$ such that $E_k^+=\{z\}$
and $z\in E^-_j$ for some $k\le j$, then the hyperedge $E_j$ can be deleted
without disturbing property (ii) in the definition. Note that the
reachability partial order $\le'$ on $\mathscr{Q}$ is consistent with the
linear of the edges. Given a subset $S\subseteq V$ of designated ``source
vertices'', an augmented hypergraph $\mathscr{H}'$ can be constructed by
adding an artifical source vertex $o$ and directed edge $(\{o\},\{s\})$ for
all $s\in S$. Then every B-hyperpath $\mathscr{Q}'$ from $o$ to $t\in V$
consists of $o$, a subset of the edges from $o$ to elements of $S$, and an
acyclic subhypergraph $\mathscr{P}$ of $\mathscr{H}$ that is grounded in a
subset of $S$ and has $t$ as its unique maximum element
\cite{Fagerberg:18a}. That is, $\mathscr{P}$ is an assembly pathway. Note
that in this setting, we do not require that $\mathscr{H}$ itself is
acyclic. Conversely, consider an acyclic subhypergraph $\mathscr{P}$ of
$\mathscr{H}$ grounded in the set $S\subseteq V$ of its minimal
elements. Then the augmented subhypergraph $\mathscr{H}'$ constructed as
above contains an B-hyperpath $\mathscr{Q}$ from $o$ to $t$ that is a
subhypergraph of the augmentation $\mathscr{P}'$.

In a B-hypergraph, therefore, minimum size assembly pathways are
essentially the same as minimum size B-hyperpaths. This allows us to drop
the condition that $\mathscr{H}$ is acyclic. Instead of considering
grounded subhypergraphs, it suffices to specify a start subset $S\subseteq
V$. Finding a shortest B-hyperpath in a B-hypergraph is NP-hard in general
\cite{Italiano:89}. It seems unknown, however, whether NP-hardness persists
if tails sizes are bounded by a constant $K$, and particular if $K=2$.

It is also possible to drop the condition that $\mathscr{H}$ is a
B-hypergraph. A natural generalization of B-hyperpaths for this setting is
discussed in \cite{Ritz:17}. In this case, minimality w.r.t.\ edge and
vertex inclusion no longer implies acyclicity. The notion of a
\emph{shortest acyclic B-hyperpath}, however, is still well-defined and
provides a natural way of defining assembly pathways. In \cite{Ritz:17},
this construction is applied to signalling pathways. These authors also
show that computing a shortest B-hyperpath in this setting is NP-hard even
if the size of the heads and tails of the hyperedges is at most $3$.

\subsection{B-hypergraphs and grammar compression}

As noted in \cite{Abrahao:24}, an assembly pathway in a B-hypergraph can be
associated with a CFG that represents each step of the assembly as a
derivation rule.  This observation also pertains to any B-hyperpath
$\mathscr{P'}$ from $o$ to $x$ in an augmented B-hypergraph
$\mathscr{H}'$. As above, we denote by $S\coloneqq \{s|
(\{o\},\{s\})\in\mathcal{E}(\mathscr{P}')\}$ the set of heads of the
augmentation edges, i.e., the basis set of the assembly pathway; we write
$\mathscr{P}$ for the subhypergraph of $\mathscr{P}'$ contained in
$\mathscr{H}$, and set $P\coloneqq V(\mathscr{P})$.  Since $\mathscr{P}'$
is a minimal B-hyperpath, a vertex is the head of exactly one edge, which
we denote by $E(z)=(E^-(z),\{x\})$, if $z\in V\setminus S$, and it does not
appear as the head of an edge at all if $z\in S$. Moreover, recall that the
reachability relation $\le$ on $V$ is a partial order, and thus by
Szpilrajn's theorem \cite{Szpilrajn:30} can be extended to a total order,
which we also denote by $\dot<$. In particular, we can choose this linear
extension such that $z'\dot<z$ if $z'\in S$ and $z\in V\setminus S$.

The following discussion in essence recapitulates Thm.~7 of
\cite{Abrahao:24} with a somewhat simpler specification of the CFG
\cite{Charikar:05} $\mathfrak{G}_x = (S, V\setminus S, x,\mathcal{R})$,
where $S$ is the set of terminal symbols, $V\setminus S$ is the set of
non-terminal symbols, and $x\in V\setminus S$ serves as start symbol.  The
set $\mathcal{R}$ comprises the derivation rules $z\to \tilde E^-(z)$,
here $\tilde E^-(z)=u_1u_2\dots u_q$, $q\ge 1$ denotes the word formed by
the terminals and non-terminals corresponding to the $u_i\in E^-(z)$.  For
every non-terminal $z\in V\setminus S$ there is exactly one derivation
rule. Moreover, transferring the order $\dot<$ from $V$ to the set of
terminals and non-terminals, we have $u_i\dot< z$ for all $u_i\in\tilde
E^-(z)$, i.e., $\mathfrak{G}_x$ is acyclic and hence it generates exactly
one finite string, which represents the composition of $x$ in terms of
elements of $S$. Moreover, restricting the derivations to
$\mathcal{R}_y\coloneqq\{(y\to \tilde E^-(y))| y\le z$ is a grammar that
generates this composition representation of $y$, which by construction
appears as a substring of $x$. The grammar $\mathfrak{G}_x$ thus not only
encodes the target $x$ but the entire sub-hypergraph $\mathscr{P}$.  Since
there is one derivation rule for every non-terminal, we have
$|\mathcal{R}|=|V\setminus S|=\ass(x)$. 

If we insist that $\mathfrak{G}_x$ is in Chomsky normal form, we need $|S|$
additional non-terminals representing the seeds, and $|S|$ additional rules
to replace each of these non-terminals by the corresponding terminal. If
$|E^-(z)|\ge2$ we need to insert $|E^-(z)|-2$ intermediary non-terminals
and corresponding derivation rules. We note, finally, that the contruction
is easily extended to general hypergraphs by inserting derivation rules
$z_i\to \tilde E^-$ for all $z_i\in E^+$ and then removing all but the rule
with $z_i$ on the l.h.s., as well as any rules where $z_i$ is not contained
in the tail of some hyperedge. This amounts to considering $\mathscr{H}^B$
instead of $\mathscr{H}$.

The CFG $\mathfrak{G}_x$ derived directly from the shortest hyperpath will
in general not be the smallest grammar. Finding a smallest grammar,
however, is NP-complete \cite{Charikar:05}. Moreover, the smallest grammar
would correspond to assembly in a setting where essentially any combination
of building block is feasible. This seems unrealistic at least in
applications to chemical systems, and thus may not be the most useful
measure if only a particular subset of transformations -- such as specific
chemical reactions -- is allowed.

\subsection{Assembly on General Directed Hypergraphs} 

The ``complexity'' of an assembly pathway is naturally quantified by the
number of operations, i.e., the number of hyperedges. We introduce the
assembly index here in terms of hypergraphs and show that it coincides with
the definition used in the literature on assembly spaces. 
\begin{definition}
  \label{def:asspath}
  Let $\mathscr{H}=(V,\mathcal{E})$ be a grounded acyclic hypergraph and
  $x\in V$. Then the assembly index $\ass(x)$ of a vertex $x\in V$ is the
  minimum number of hyperedges in an assembly pathway for $x$ in
  $\mathscr{H}$.
\end{definition}
In \cite{Marshall:22} as well as a series of applications,
e.g.\ \cite{Marshall:21,Liu:21,Sharma:23}, the assembly index is defined in
terms of the cardinality of the vertex set of assembly pathways
$\mathscr{P}$. We next show that the two definitions are in fact equivalent
for B-hypergraphs, and thus in particular for assembly spaces:
\begin{lemma}
  If $\mathscr{H}=(V,\mathcal{E})$ is a grounded acyclic B-hypergraph and
  $x\in V$, then $\ass(x)$ equals the number minimum number of non-minimum
  vertices in an assembly pathway $\mathscr{P}$ for $x$.
\end{lemma}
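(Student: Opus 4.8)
The plan is to prove the two inequalities $\min_{\mathscr{P}}|W\setminus\min W|\le\ass(x)$ and $\ass(x)\le\min_{\mathscr{P}}|W\setminus\min W|$ separately, both minima being taken over assembly pathways $\mathscr{P}=(W,\mathcal{F})$ for $x$, and throughout we write $\le'$ for reachability inside $\mathscr{P}$. We may assume $x\notin\min V$ and that the minimum is finite, the other cases being immediate; since an assembly pathway is rooted and contains $x$, the hypothesis $x\notin\min V$ then forces $x\notin\min W$ as well.

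For the first inequality I would show that in \emph{any} assembly pathway the non-minimal vertices inject into the hyperedge set. Let $w\in W\setminus\min W$. By groundedness there is $s\in\min W$ with $s\le' w$, and $s\ne w$ because $w\notin\min W$, so $\mathscr{P}$ contains a path of length $\ge 1$ from $s$ to $w$ whose last hyperedge $E$ satisfies $w\in E^+$; since $\mathscr{P}$ is a subhypergraph of the B-hypergraph $\mathscr{H}$ we have $E^+=\{w\}$, and this $E$ has a nonempty tail. Choosing one such edge $E(w)$ for every non-minimal $w$ produces a map $W\setminus\min W\to\mathcal{F}$ that is injective, since $E(w)=E(w')$ would force $w$ and $w'$ to both lie in the one-element head of that common edge. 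Hence $|W\setminus\min W|\le|\mathcal{F}|$ for every pathway, and specializing to one that realizes $\ass(x)$ yields $\min_{\mathscr{P}}|W\setminus\min W|\le\ass(x)$.

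For the reverse inequality I would take an assembly pathway $\mathscr{P}=(W,\mathcal{F})$ attaining $\min|W\setminus\min W|$ and prune it by a recursion started at $x$: select an edge $E(x)\in\mathcal{F}$ with head $x$ and nonempty tail — one exists by the preceding paragraph, since $x\notin\min W$; then, for every tail vertex $u$ of an already-selected edge, if $u\notin\min W$ select some $E(u)\in\mathcal{F}$ with head $u$ and nonempty tail and recurse, otherwise stop at $u$. Let $N$ be the set of vertices whose edge was selected, $\mathcal{F}'=\{E(w)\mid w\in N\}$, and $W'$ the set of vertices occurring in $\mathcal{F}'$ together with $x$. Since every hyperedge is finite and $N\subseteq W\setminus\min W$ is finite the recursion terminates, and $|\mathcal{F}'|\le|N|\le|W\setminus\min W|$, so it remains to check that $\mathscr{P}'=(W',\mathcal{F}')$ is an assembly pathway for $x$. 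It is acyclic as a subhypergraph of $\mathscr{H}$ and, being finite, it is grounded; it is rooted because each $w\in N$ heads the edge $E(w)\in\mathcal{F}'$, whose nonempty tail contains a vertex $\ne w$ by acyclicity, so $w\notin\min W'$, whereas every vertex of $W'\setminus N$ is a tail of some $E(w)$ and therefore already lies in $\min W\subseteq\min V$; hence $\min W'\subseteq\min V$. Finally, condition~(ii) follows by induction on the recursion depth: every $z\in W'$ reaches $x$ inside $\mathscr{P}'$, trivially if $z=x$, and otherwise because $z$ was discovered as a tail of some $E(w)$ and hence reaches $w$ through the single edge $E(w)$, while $w$ reaches $x$ by the inductive hypothesis. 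This gives $\ass(x)\le|\mathcal{F}'|\le|W\setminus\min W|=\min_{\mathscr{P}}|W\setminus\min W|$, which together with the first inequality proves the lemma.

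The step I expect to be the genuine obstacle is the verification of condition~(ii) for the pruned pathway. Naively retaining one hyperedge per non-minimal vertex of $\mathscr{P}$ does \emph{not} in general preserve the paths leading to $x$, because the kept edge with head $w$ may have a tail disjoint from the edge that was actually used to reach $x$ in $\mathscr{P}$; organizing the selection as a recursion that begins at $x$ and only ever follows edges backward is precisely what repairs this. The remaining points — termination of the recursion, that every selected edge already belongs to $\mathcal{F}$, and that $N\subseteq W\setminus\min W$ — are routine bookkeeping resting only on acyclicity and on the minimum being finite.
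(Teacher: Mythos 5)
Your proof is correct and rests on the same key observation as the paper's: in a B-hypergraph, singleton heads force each non-minimal vertex of an (edge-)minimal assembly pathway to be the head of exactly one hyperedge, so hyperedges and non-minimal vertices are in bijection. The paper packages this as a single exchange/contradiction argument on an edge-minimal pathway, whereas you spell out the two inequalities separately and make explicit, via the backward pruning recursion from $x$, why discarding redundant edges preserves the assembly-pathway conditions --- a point the paper only asserts.
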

\begin{proof}
  Suppose $\mathscr{P}$ is an assembly pathway for $x$ in $\mathscr{H}$
  that is minimal w.r.t.\ the removal of edges. If $x$ is not reachable
  from $u$, then $\mathscr{P}$ contains a subhypergraph $\mathscr{P}'$ that
  does not contain $u$, and thus every hyperedge $E$ that has $\{u\}=E^+$
  as its head can be removed, contradicting the minimality of
  $\mathscr{P}$. On the other hand, if $\{u\}=E_1^+=E_2^+$ is the head of
  two distinct hyperedges $E_1$ and $E_2$ of $\mathscr{P}$ then either
  $E_1$ or $E_2$ can be removed, resulting in a subhypergraph
  $\mathscr{P}'$ of $\mathscr{P}$. Clearly $\mathscr{P}'$ is an assembly
  pathway since $u$ is reachable from the basis and $x$ is reachable from
  $u$. Again, this contradicts the minimality of $\mathscr{P}$. Thus every
  vertex in $\mathscr{P}$ is the head of at most one edge.  Moreover, every
  vertex in the tail of an edge is either contained in $\min V$ or is the
  head of a unique hyperedge in $\mathscr{P}$. Thus the number of edges in
  $\mathscr{P}$ coincides with the number of vertices in $\mathscr{P}$ that
  are not contained in its basis set $\min V(\mathscr{P})$.
\end{proof}
In B-hypergraphs, therefore, $\ass(x)$ can be expressed the minimum number
of hyperedges in a suitably defined hyperpaths connecting $x$ with the
basis set.  A difficulty arising in the setting of general directed
hypergraphs, is that acyclicity will in general not be attainable if
$|E^+_k|\ge 2$. The reason is that chemical reactions in general also
produce small by-products that may have appeared previously in
$\mathscr{P}$ and thus form cycles.  In particular, acyclicity rules out
any form of network catalysis \cite{Braakman:13,Smith:08}. In the general
case, therefore, it is necessary to allow generalizations of B-hyperpaths
in which heads of edges are not restricted to a single vertex, i.e., for
which $|E^+_k|>1$ is allowed. This forces us to drop the requirement of
acyclicity in Definition~\ref{def:asspath} since we cannot -- and do not
want to -- rule out secondary products.

To accommodate such situations, we first observe that in the case of
$B$-hypergraphs we can re-interpret the notion of reachability of vertices
as reachability of edges. A hyperedge with $E^+=\{x\}$ is reachable through
a pathway $\mathscr{P}$ that reaches $x$ through hyperedges $E_i$ with
$E_i^+=\{x_i\}$ for all $x_i\in E^-$. Thus we may \emph{define} $E'\le E''$
whenever $x'\in E'^+$, $x''\in E''^+$ and $x'\le x''$. Since $\le$ is a
partial order in the if $\mathscr{P}$ is an acyclic B-sub-hypergraph, we
also obtain a partial order on the set of hyperedges in this case.  This
reminiscent of the notion of realizability of chemical reaction pathway in
\cite{Andersen:23a,Andersen:25a}. A sequence of reactions is realizable if
they can be ordered such that for each reaction, all its educts have been
produced already by an earlier reaction. This simple idea naturally
generalizes to assembly processes in general: An assembly step can be
performed only if all vertices in $E^-$ were reached in an earlier step. We
can therefore replace acyclicity of the hypergraph itself by the existence
of a suitable partial order that ensures this type of reachability. We
formalize this idea in the following way:
\begin{definition}
  \label{def:genass} 
  Let $\mathscr{H}$ be a directed hypergraph, $\emptyset\ne S\in
  V(\mathscr{H})$ and $x\in V(H)$.  Then the subhypergraph $\mathscr{P}$ of
  $\mathscr{H}$ is an assembly pathway for $x$ if (i) $\mathscr{P}$ is
  grounded in $S$, (ii) $x\in \max\mathscr{P}$, and (iii) there is a
  partial order $\preceq$ on $E(\mathscr{P})$ such that for all $F\in
  E(\mathscr{P})$ and all $x\in F^-\setminus S$ there is $E_x\in
  E(\mathscr{P})$ with $E_x\preceq F$ and $E_x\ne F$.
\end{definition}
We note that condition (iii) matches the notion of B-connections
\cite{Gallo:93,Nielsen:01,Thakur:09}, except that it starts from a ground
set $S$ instead of a single vertex $s$. Again we may restrict ourselves to
minimal assembly pathways by considering subsets of hyperedges that still
satisfy the definition. Moreover, Def.~\ref{def:genass} reduces to our
previous definition on $B$-hypergraphs.

Given a general directed hypergraph $\mathscr{H}$ we may construct a
B-hypergraph $\mathscr{H}^B$ by replacing every hyperedge $E=(E^+,E^-)$
into a set of $|E^+|$ hyperedges of the form $(E^-,\{x\})$ with $x\in
E^+$. This construction allows us to translate Definition~\ref{def:genass} as
follows:
\begin{lemma}
  \label{lem:general}
  Let $\mathcal{H}$ be a directed hypergraph, let $\mathscr{P}$ be a
  subhypergraph of $\mathscr{H}$ grounded in $S$, and $x\in
  V(\mathscr{H})$.  Then $\mathscr{P}$ is an assembly pathway for $x$ if
  and only if $\mathscr{P}^B$ contains an assembly pathway $\mathscr{P}^*$
  for $x$ on $\mathscr{H}^B$.
\end{lemma}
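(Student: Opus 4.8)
The plan is to push everything through the elementary dictionary between $\mathscr{H}$ and $\mathscr{H}^B$ given by edge splitting. Write $\pi(G)$ for the parent edge $(E^-,E^+)$ in $\mathscr{H}$ of a split edge $G=(E^-,\{y\})$ of $\mathscr{H}^B$ (so $y\in E^+$), and likewise for $\mathscr{P}$ versus $\mathscr{P}^B$; note that $V(\mathscr{P}^B)=V(\mathscr{P})$. Two preservation facts carry the whole argument. First, splitting preserves reachability: a path $(x_0,E_1,x_1,\dots,E_n,x_n)$ in $\mathscr{H}$ becomes the path $(x_0,E_1^{(x_1)},x_1,\dots,E_n^{(x_n)},x_n)$ in $\mathscr{H}^B$, and $\pi$ sends any path of $\mathscr{H}^B$ back to a path of $\mathscr{H}$, so $\le$ is literally the same relation on $V(\mathscr{H})$ as on $V(\mathscr{H}^B)$, and the same on $V(\mathscr{P})=V(\mathscr{P}^B)$. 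Second, splitting preserves forward B-reachability from $S$: form the production closure of a subhypergraph by starting from $S$ and repeatedly adjoining the head(s) of any edge all of whose tail has already been produced; firing an $\mathscr{H}$-edge $E$ adjoins all of $E^+$, exactly the combined effect of firing the splits $E^{(y)}$, $y\in E^+$, in $\mathscr{H}^B$, so the closures of $\mathscr{P}$ and $\mathscr{P}^B$ agree and an edge of $\mathscr{P}$ eventually fires precisely when its splits do in $\mathscr{P}^B$.

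With these facts the three clauses of Definition~\ref{def:genass} transfer term by term between $\mathscr{P}$ and $\mathscr{P}^B$: clause (i) holds for $\mathscr{P}$ by hypothesis and, since $\min$ is determined by $\le$, for $\mathscr{P}^B$; clause (ii), $x\in\max\mathscr{P}$, is a statement about $\le$ and so is equivalent to $x\in\max\mathscr{P}^B$; clause (iii) is equivalent, by the ``B-connection equals a topological firing order'' reasoning already used for B-hyperpaths in the excerpt, to the condition that every edge of the subhypergraph be B-connected to $S$ within it, and this is kept intact by the second fact. For an explicit witness in the forward direction I would send a partial order $\preceq$ on $E(\mathscr{P})$ to the order on $E(\mathscr{P}^B)$ given by $G_1\preceq^B G_2$ iff $G_1=G_2$ or $\pi(G_1)\prec\pi(G_2)$; that $\preceq^B$ is a partial order is routine (antisymmetry and transitivity reduce to irreflexivity and transitivity of $\prec$), and a producer $E_z\prec\pi(G)$ of any $z\in G^-\setminus S$ supplies the producer $E_z^{(z)}\prec^B G$ demanded by clause (iii) for $\mathscr{P}^B$. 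Thus $\mathscr{P}$ is an assembly pathway for $x$ in $\mathscr{H}$ if and only if $\mathscr{P}^B$ is one in $\mathscr{H}^B$, and taking $\mathscr{P}^*\coloneqq\mathscr{P}^B$ settles the ``only if'' direction.

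The ``if'' direction is where the remaining work --- and the one genuinely delicate point --- lies. Given an assembly pathway $\mathscr{P}^*\subseteq\mathscr{P}^B$ for $x$ with witnessing order $\preceq^*$, applying $\pi$ to its edges and reusing the preservation facts shows $\pi(\mathscr{P}^*)$ to be a subhypergraph of $\mathscr{H}$ grounded in a subset of $S$ and satisfying clause (iii); only clause (ii) can still fail, and only because $\mathscr{P}^*$ may omit split edges, so that $\pi(\mathscr{P}^*)$ need not exhaust $\mathscr{P}$, while conversely $\mathscr{P}$ may carry a superfluous hyperedge whose head lies above $x$ even when $\mathscr{P}^B$ still contains an assembly pathway for $x$ (a hyperedge that consumes $x$ and emits a byproduct is the extreme case: it survives in $\mathscr{H}^B$ only as a harmless self-loop split yet disqualifies $\mathscr{P}$ in $\mathscr{H}$). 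I would close this direction by fixing the reading of ``$\mathscr{P}^B$ contains an assembly pathway $\mathscr{P}^*$'' that the statement intends --- in the applications this amounts to taking $\mathscr{P}$ minimal, so that the pertinent object is $\mathscr{P}^B$ itself up to deleting split edges whose head fails to reach $x$, a deletion that by the second preservation fact never disturbs clauses (i)--(iii) --- after which $\pi$ identifies $\mathscr{P}^*$ with $\mathscr{P}$ and the equivalence drops out of the term-by-term transfer established above.
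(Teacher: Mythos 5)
Your proposal is correct and follows essentially the same route as the paper's proof: both directions transfer the witnessing partial order through the edge-splitting dictionary ($E\mapsto\{(E^-,\{y\})\,:\,y\in E^+\}$ and its inverse $\pi$), using the fact that splitting changes neither the vertex set nor the reachability relation. The differences are in execution rather than strategy. In the forward direction the paper additionally prunes $\mathscr{P}^B$ (deleting split edges whose head is in $S$ or is produced earlier) so as to land on an acyclic, minimal B-pathway, whereas you take $\mathscr{P}^*=\mathscr{P}^B$ itself; since the statement only asks that $\mathscr{P}^B$ \emph{contain} an assembly pathway, either choice is acceptable, though the pruning is what makes $\mathscr{P}^*$ match the acyclic B-hypergraph definition used earlier in the paper. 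In the converse direction you correctly identify the one real weakness of the statement as literally written: a hyperedge of $\mathscr{P}$ that consumes $x$ (or is otherwise superfluous) can destroy condition (ii) for $\mathscr{P}$ even though $\mathscr{P}^B$ still contains an assembly pathway for $x$. The paper's proof silently sidesteps this by inducing the order only on ``a subset $E(\mathscr{P})$'' of hyperedges, i.e., it really verifies that $\pi(\mathscr{P}^*)$ is an assembly pathway rather than the given $\mathscr{P}$; your explicit minimality convention is the honest repair of the same point, and your explicit check that the induced relation $\preceq^B$ is a partial order is a detail the paper asserts without argument. No gap beyond the one already present in the paper's own formulation.
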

\begin{proof}
  Suppose $\mathscr{P}$ is an assembly pathway according to
  Definition~\ref{def:genass}, and let $\prec$ be the partial order on
  $E(\mathscr{P})$. Then $\mathscr{P}^B$ inherits a partial order on is
  $B$-hyperedges such that $E\prec F$ translates to
  $(E^-,\{z\})\prec_B(F^-,\{y\})$ for all $z\in E^+$ and $y\in
  E^+$. Consider the subhypergraph $\mathscr{P}^*$ obtained from
  $\mathfrak{P}^B$ by removing all B-hyperedges $E=(E^-,\{u\})$ for which
  $u\in S$ or there is a hyperedge $E=(F^-,\{u\})$ with $F\preceq_B E$.
  Clearly, $E(\mathscr{P}^*)$ is partially ordered and each of its vertices
  is reachable from $S$. Thus $\mathscr{P}^*$ is grounded and acyclic, and
  therefore an assembly pathway on the B-hypergraph
  $\mathscr{H}^B$. Conversely, for an assembly pathway on the B-hypergraph
  $\mathscr{H}^B$ we induce a partial order on the general hypergraph
  $\mathscr{H}$ by setting $E\prec F$ for two distinct edges $E$ and $F$ of
  $\mathscr{H}$ if there are vertices $z\in E^+$ and $y\in F^+$ such that
  $(E^-,\{z\})\prec_B (F^-,\{y\})$, i.e., if $y$ in $F$ is reachable only
  after $z$ in $E$. Since the edge set $E(\mathfrak{P}^B)$ is partially
  ordered, we obtain a partial order on a subset $E(\mathscr{P})$ of
  hyperedges on $\mathscr{H}$ such that for each $z\in F^-$ there is a
  hyperedge $E\prec F$ with $z\in E+$, and thus $\mathscr{P}$ is an
  assembly pathways according to Definition~\ref{def:genass}.
\end{proof}
The point of Lemma~\ref{lem:general} is to show that assembly can
meaningfully be expressed in general directed hypergraphs with reactions
that produce more than one product, and thus in particular pertains to
chemical reactions networks in full generality.

\subsection{Alternative Measures of Assembly Complexity} 

We now turn to the problem of defining an assembly index in the setting of
general directed hypergraphs. We argue that there are at least two equally
meaningful ways that do not yield identical results. In the setting of
chemical synthesis, each reaction product that is further use, i.e.,
appears in an $E^-$, must be isolated separately and hence incurs its own
cost, even if it is produced together with another useful product. This
scenarios argues for counting the vertices appearing in $E^-$, or,
equivalently the B-hyperedges in $\mathscr{P}^B$. In chemical evolution, on
the other hand, the fortuitous production of multiple relevant assembly
intermediates incurs no extra cost, and hence natural measure is the
cardinality of $E(\mathscr{P})$.
\begin{proposition}
  Let $\mathscr{P}$ and $\mathscr{P}^B$ be minimal assembly pathways for
  $x$ on $\mathscr{H}$ and $\mathscr{H}^B$, respectively. Then
  $\min|E(\mathscr{P})|\le \min |E(\mathscr{P}^B)|$.
\end{proposition}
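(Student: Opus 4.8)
The plan is to exhibit, from any minimal assembly pathway $\mathscr{P}^B$ for $x$ on $\mathscr{H}^B$, an assembly pathway $\mathscr{P}$ for $x$ on $\mathscr{H}$ whose edge set is no larger, and then invoke minimality of $\min|E(\mathscr{P})|$. The natural construction is the inverse of the $B$-splitting: each $B$-hyperedge of $\mathscr{P}^B$ has the form $(E^-,\{z\})$ for some original hyperedge $E=(E^-,E^+)\in\mathcal{E}$ with $z\in E^+$; I would let $\mathscr{P}$ consist of all original hyperedges $E$ that arise in this way, together with the vertices they touch. By Lemma~\ref{lem:general} (applied to the assembly pathway $\mathscr{P}^B$, noting that $\mathscr{P}^B$ is itself an assembly pathway on $\mathscr{H}^B$ in the sense of Definition~\ref{def:genass} since it is grounded, has $x$ maximal, and is acyclic hence trivially carries the required partial order) the resulting $\mathscr{P}$ is an assembly pathway for $x$ on $\mathscr{H}$.

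First I would record that the map sending a $B$-hyperedge $(E^-,\{z\})$ to the original hyperedge $E$ it was split from is well-defined on $E(\mathscr{P}^B)$ and that its image is exactly $E(\mathscr{P})$. Since distinct original hyperedges with the same tail and overlapping heads are possible in a multihypergraph, one must be slightly careful that the bookkeeping remembers which original edge each $B$-edge came from; with that convention the map is at worst many-to-one (several $B$-edges $(E^-,\{z_1\}),(E^-,\{z_2\}),\dots$ can collapse to the single $E$ when $z_1,z_2,\dots\in E^+$). Consequently $|E(\mathscr{P})|\le|E(\mathscr{P}^B)|$. Applying this to a \emph{minimal} $\mathscr{P}^B$, i.e.\ one with $|E(\mathscr{P}^B)|=\min|E(\mathscr{P}^B)|$, yields an assembly pathway on $\mathscr{H}$ with at most that many edges, so the minimum over all assembly pathways on $\mathscr{H}$ satisfies $\min|E(\mathscr{P})|\le\min|E(\mathscr{P}^B)|$, which is the claim.

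The one point that needs genuine care, rather than routine verification, is that the $\mathscr{P}$ produced by collapsing a minimal $B$-pathway really is an assembly pathway in the sense of Definition~\ref{def:genass} and not merely a subhypergraph of $\mathscr{H}$: one must check conditions (i) grounded in $S$, (ii) $x\in\max\mathscr{P}$, and (iii) existence of the partial order on $E(\mathscr{P})$. I would obtain (iii) by pushing the linear (hence partial) order on $E(\mathscr{P}^B)$ down along the collapsing map exactly as in the second half of the proof of Lemma~\ref{lem:general}: set $E\prec F$ in $\mathscr{H}$ whenever some $B$-edge coming from $E$ precedes some $B$-edge coming from $F$ in $\mathscr{P}^B$; the acyclicity of $\mathscr{P}^B$ guarantees this relation is consistent (no $E\prec F\prec E$), and for every $z\in F^-\setminus S$ the defining order of the $B$-pathway supplies a $B$-edge producing $z$ that precedes the $B$-edge $(F^-,\{y\})$, whose originating edge $E_z$ then satisfies $E_z\prec F$. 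Groundedness in $S$ and $x\in\max\mathscr{P}$ transfer directly since $\mathscr{P}$ and $\mathscr{P}^B$ share the same vertex set and the same reachability structure on vertices. The main obstacle, then, is purely notational: handling multihypergraph multiplicities so that "the original edge a $B$-edge was split from" is unambiguous; once that convention is fixed the inequality is immediate from the counting argument above.
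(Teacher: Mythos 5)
Your proposal is correct and follows essentially the same route as the paper's proof: the operative step in both is the many-to-one collapsing map sending each $B$-hyperedge $(E^-,\{z\})$ of a minimal pathway in $\mathscr{H}^B$ back to the originating hyperedge $E$ of $\mathscr{H}$, which yields an assembly pathway with no more edges. You are in fact somewhat more careful than the paper in verifying, via Lemma~\ref{lem:general} and the transfer of the partial order, that the collapsed subhypergraph genuinely satisfies Definition~\ref{def:genass}.
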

\begin{proof}
  Let $\mathscr{P}$ be a minimal pathway from $S$ to $x$ in $\mathscr{H}$,
  and let $\mathscr{P}^B$ be the corresponding pathway in $\mathscr{H}^B$,
  i.e., $(E^-,\{x\})\in E(\mathscr{P}^B)$ if $x\in E^+$ and $E=(E^-,E^+)\in
  E(\mathscr{P})$. If $\mathscr{P}$ is a minimal pathway from $S$ to $x$,
  then each of its hyperedges $E$ has a vertex $y_E\in E^+$ that is not
  contained in $S$ or a hyperedge $E'\prec E^+$, since otherwise $E$ can be
  removed, contradicting minimality. Therefore, the hyperedge
  $E_y=(E^-,\{y\})$ must be contained in a minimal pathway obtained by
  hyperedge-removal from $\mathscr{P}^B$. Conversely, every minimal
  assembly pathway from $S$ to $x$ in $\mathscr{H}^B$ can be extended to a
  subhypergraph of $\mathcal{H}$ by inserting, for every hyperedge
  $(E,\{x\})$ a corresponding hyperedge in $\mathscr{H}$ from which it
  derived. Thus minimal assembly pathways in $\mathscr{H}^B$ cannot be
  shorter than their corresponding assembly pathway in $\mathscr{H}$.
\end{proof}
In $\mathscr{H}$, however, we may exploit the fact that more than one
vertex in $E^+$ may be utilized later, thus reducing the number of required
edges, i.e., assembly steps. As a consequence, the minimal number of
vertices and the minimal number of hyperedges no longer matches for
assembly in general directed hypergraphs. We will briefly comment on issues
of computational complexity in Section~\ref{ssect:graphass}.

Counting the minimal number of distinct hyperedges, i.e., reactions,
however is not the only meaningful way of associating a cost to a pathway
$\mathscr{P}$ in $\mathscr{H}$.  For applications in synthesis planning, a
different measure is more natural \cite{Hendrickson:77}. In the setting of
B-hypergraphs, every $u\in E_k^-$ is the head of a unique edge $E(u)$ or
$u\in S$. In the first case, the cost of producing $u$ is the total cost
$c(u)$ of the sub-B-hyperpath from $o$ to $u$, i.e., the cost of the
synthesis of $u$. In the second case there is the cost $c(s)$ of starting
material. The cost of the synthesis of $v$ is composed of cost of the
materials $u\in E^-(v)$ plus some cost $c^*(v)$ for the synthesis step
$E(v)$ itself, see e.g.\ \cite{Nielsen:05,Fagerberg:18a}. This yields the
recursions
\begin{equation*}
  c(v) = c^*(v)+\sum_{u\in E(v)^-} c(u) \,.
\end{equation*}

In the simplest case, only the number of reactions, i.e., hyperedges is
counted.  This amounts to setting $c(s)=c^*(s)=0$ for $s\in S$ and
$c^*(v)=1$ for all $v\in V\setminus S$. Then $c(v)$ simply counts each
non-augmentation edges along the sub-B-path $\mathscr{Q}_v$ from $o$ to $v$
as many times as its head appears in the tail of another hyperedge in
$\mathscr{Q}_v$. Therefore, $c(v)$ equals the number of hyperedges in
$\mathscr{Q}_v$ if and only if every head is used only once. Denoting by
$\hat c(x)$ the length of the shortest B-hyperpath from $o$ to $x$ (not
counting the augmentation edges), we obtain immediately that $\ass(x)\le
\hat c(x)$ with equality if and only if no head is used more than once.
Equality thus holds in particular in the ``split-branched'' spaces
introduced in \cite{Marshall:22}.  The shortest B-hyperpath, and thus in
particular also its ``length'' $\hat c(x)$ can be computed efficiently by
means of dynamic programming \cite{Nielsen:05}. It is worth noting in this
context that B-hyperpaths with several alternative cost functions have been
considered as ``minimum makespan assembly plans'' already in
\cite{Gallo:02}.

\section{Computational Considerations} 
\label{sect:comp}

\subsection{Assembly by Graph Rewriting}
\label{ssect:graphass}

Up to this point we have not considered the rule-based aspect of the
assembly theory and instead only focused on the consequences of rule
application of a seed set $S$, namely the resulting directed hypergraph
that captures an abstract temporal dimension of the formalism.  In the context
of chemistry, it is natural to modeling assembly in terms of graph
transformations since molecules have a natural representation as (labeled)
graphs and chemical reaction mechanisms determine specific patterns of
rearrangements of chemical bonds, i.e., edges \cite{Andersen:17b}.

Chemical graph transformation are naturally modeled by means of
double-pushout graph rewriting \cite{Andersen:13a}. A reaction rule is
specified as a span, i.e., a pair of graph monomorphisms $L\leftarrow
K\rightarrow R$, where $K$ is a graph describing parts that remain unchanged,
while $L$ and $R$ are the corresponding patterns in reactants and products,
respectively. In application to chemistry, the correspondence respects the
preservation of atoms, i.e., the restriction of two graph monomorphisms to
the vertices are bijective. The application of such a rule requires 
that $L$ appears a subgraph in the substrate graph. The match of $L$ in $G$
is then replaced by $R$ in a principled manner, resulting in a graph $H$ of
reaction products. The connected components of the reactant and product
graphs $G$ and $H$ represent the molecules. Each rule application, formally
called a derivation $G\Rightarrow H$ thus defines hyperedge with the
molecules of $G$ and $H$ as its tail and head, respectively. The underlying
hypergraph $\mathscr{H}$, therefore can constructed explicitly by
iterative rule application. For technical details we refer to
\cite{Andersen:13a}.

A key observation is that each DPO rewriting rule $r=(L\leftarrow
K\rightarrow R)$ has a natural inverse $r^{-1}=(R\leftarrow K\rightarrow
L)$. Instead of constructing $x$ from $S$ using a given rule set, one can
equivalent decompose $x$ iteratively into constituents until all of them
are reduced to building block in the seed set $S$.  Assembly in the sense
of Cronin and Walker is therefore in essence the inverse of
retrosynthetic analysis \cite{Vleduts:63,Corey:67}, where chemical
structures are recursively split into smaller components until sufficiently
simple starting materials are reached. Similar ideas have been explored in
synthesis planning \cite{Bertz:93}.

The invertibility of graph rewriting rules implies that the assembly as
defined by \cite{Marshall:21}, and on rule-derived hypergraphs in general,
can be computed by first expanding a space of all possible compounds
obtained by \emph{disassembling} the focal compound $x$, and then searching
for an optimal assembly pathway in the resulting hypergraph. The expansion
of the space can be defined as the iterative application of a single simple
graph rewriting rule. This rule selects a single vertex in the molecular
graph and duplicates it, creating two vertices without an edge in
between. Semantically, this operation either splits a compound into two or
breaks a cycle. The formal DPO rule is provided in the appendix, and can be
seen more abstractly in Table~\ref{tab:daniel} below.  Notably, this
transformation always increases the number of atoms in the products
relative to the reactant, which from a chemical (retro-)synthetic
perspective might be considered controversial. The process is applied
recursively until only base compounds, consisting of two vertices and one
edge, remain. The number of necessary splitting reactions defines the
assembly index. The ring-breaking application does not increase the
assembly index. The resulting acyclic hypergraph, when inverted, serves as
a witness structure demonstrating an optimal assembly index and, as
proposed by Cronin and Walker \cite{Marshall:21}, can be interpreted as
an assembly pathway reflecting molecular complexity.

Optimal assembly pathways can be identified using different approaches. One
can use graph transformation frameworks such as \texttt{M{\O}D}
\cite{Andersen:16a} and integer hyperflow methods, alternatively through
integer linear programming (ILP), as detailed in the appendix. Recent work
\cite{Kempes:24,Lukaszyk:25} has established that finding an optimal
assembly pathway, more precisely the Assembly Steps Problem (see also
\cite{Lukaszyk:25}) is NP-complete, while the complexity of Assembly Index
Problem remains open.

\subsection{Cyclization Effects}

In any witness for a specific assembly index, all cyclizations can be
performed as the last steps since, by definition, they do not contribute to
the assembly index. The definition in \cite{Marshall:21} also treats ring
closures by means of vertex identification, which differs from the
conventional chemical notion of cyclization. The assembly pathway
hypergraph in Table~\ref{tab:daniel} demonstrates this effect: the last
directed edge accounts for all five cyclization reactions needed to form
cubane.

\begin{figure}[h]
  \centering
  \includegraphics[width=\textwidth]{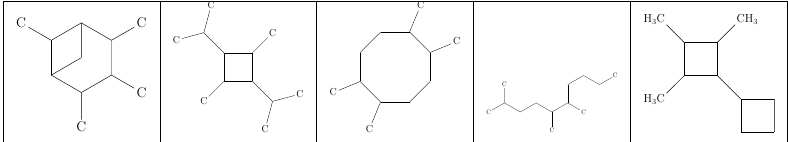} 

  \caption{Molecular structures resulting from iterative inverted
    cyclization to cubane; 4 leftmost: all of them have (like cubane) a
    molecular assembly index of 4; rightmost: molecular assembly index of
    5}
  \label{fig:cubane-table}
\end{figure}

Since cyclizations in a witness pathways do not contribute to the assembly
index, many chemically distinct compounds share the same index. For
example, cubane has a cyclomatic number of five, requiring five vertex
identifications to close rings. Applying only de-cyclization reactions to
cubane yields 241 compounds, 141 of then are part of a witness hypergraph
for the optimal assembly index, all with an assembly index of
4. Figure~\ref{fig:cubane-table} depicts four representative examples,
including a completely acyclic compound. The resulting compounds are
extremely heterogeneous from a chemist's point of view, an observation that
is not specific to the example of cubane but pertains to all cyclic
molecules. Out of the 241 possible de-cyclization products of cubane, 100
result in molecular structures with an assembly index of $5$, showing that
single cyclizations events can decrease the assembly index. The rightmost
molecular structure in Figure~\ref{fig:cubane-table} serves as an
example. Notably, de-cyclization of this specific structure can yield
molecular structures with an assembly index of 4, showing that single
cyclizations events can increase the assembly index. This highlights a
problematic aspect of the assembly index measure: a single cyclization can
either decrease, increase, or leave the assembly index unchanged. For a
complexity measure based on counting discrete events, this behavior is, at
best, counter-intuitive. In contrast, more traditional approaches assign a
cost to cyclization, as demonstrated by the dynamic programming approach.

From a chemical perspective, cycles significantly contribute to the
synthetic complexity of compounds. As integral parts of the overall
architecture of molecules they heavily impact the stereochemistry and
the practical difficulty of synthesis.  Cyclic systems, in particular small
rings, introduce significant ring strain, due to deviations from ideal bond
angles, and thus influence the reactivity. The spacial arrangement of
substituents is also effected by cycles due to their specific
conformational preferences, making stereochemical control hard during
synthesis. Installing functional groups onto cyclic structures requires
sophisticated synthetic strategies, due to steric hindrance and the
conformational constraints of rings. These chemical arguments imply that
cycles cannot be be ignored in chemical complexity measures. In a series of
theoretical studies on the optimality of synthesis plans, Hendrickson
\cite{Hendrickson:77} already showed in the 1970s that the placement of
cyclizations in a synthesis plan has a strong influence on its optimality;
good synthesis plans perform cyclizations early. The worst placement in
general is towards the end of the synthesis plan. Therefore, ignoring cyclizations in
measures of assembly complexity does not seem advisable. Excluding such reactions in the definition of the the assembly
index in \cite{Marshall:17,Marshall:21,Marshall:22,Liu:21,Jirasek:23} thus
seems hard to justify from both the chemical and the mathematical point of
view.

\subsection{Edge Removal and Alternative Assembly Measures}

Edge-based disassembly has a long-standing tradition in synthesis planning
\cite{Bertz:93}. Here, instead of merging vertices, compounds are
iteratively decomposed by removing edges. Both approaches can be compared
directly. Table~\ref{tab:daniel} shows results for cubane and the
pyrrolidine dimer \textit{1-(2-Pyrrolidinylmethyl)pyrrolidine} (ChemSpider
ID: 126001) using the vertex splitting/identification rule, while
Table~\ref{tab:daniel2} presents results using the edge removal/adding
rule.

In addition to assembly index, we explore alternative quality measures for
synthesis planning. Hendrickson \emph{et al.}\ \cite{Hendrickson:77}
proposed that optimality should be measured in terms of convergence:
synthesis plans that minimize reaction steps and balance precursor usage
tend to minimize material loss. The total weight of starting materials (TW)
serves as a measure of synthetic efficiency. This is computed recursively,
assigning weights based on precursor costs and a retro-yield parameter
\cite{Fagerberg:18a}. Unlike the assembly index, the total weight of
starting materials (TW) allows for using dynamic programming (DP). The TW
measure defines the cost of assembling a compound as the sum of the costs
of its educts, adjusted by a retro-yield factor. Specifically, if a
compound $m$ is formed by affixing compounds $m_1$ and $m_2$, its cost is
given by the sum of the costs of $m_1$ and $m_2$, multiplied by a
retro-yield parameter $r>1$, see \cite{Fagerberg:18a} for details. While
the assembly cost of a compound cannot be directly computed as the sum of
the assembly costs of its educts due to potential overlaps in assembly
spaces, the TW measure remains well-defined and can be efficiently computed
using dynamic programming \cite{Fagerberg:18a}.

Tables~\ref{tab:daniel} and~\ref{tab:daniel2} compare optimal pathways
according to the assembly index (light red background) and the TW objective
(light green background).  For cubane, there are 412 optimal DP solutions,
all requiring at least four affixations. The optimal assembly index pathway
in the depicted witness requires 4 affixations and 5 additional
cyclizations, leading to a less convergent synthesis plan. The longest path
from root to sink under the assembly index is 9 steps, whereas all 412
DP-optimal solutions require at most 8 steps.  For the pyrrolidine dimer,
645 DP-optimal synthesis plans exist, each requiring at least 6
affixations. The optimal assembly index pathway, however, requires only 5
affixations but produces a significantly less convergent synthesis.

Using the edge removal rule, Table~\ref{tab:daniel2}, a similar pattern
emerges. For cubane, the longest path using assembly index rules is 3
affixations plus 5 cyclizations in the depicted witness, whereas all
optimal DP solutions result in an earlier cyclization step, yielding 3
affixations and 4 cyclizations. For the pyrrolidine dimer, DP optimization
requires one additional affixation but reduces the longest path from source
to sink from 7 to 5 steps. 

\begin{table}[p]
\enlargethispage{20pt}
\centering
\includegraphics[width=\textwidth]{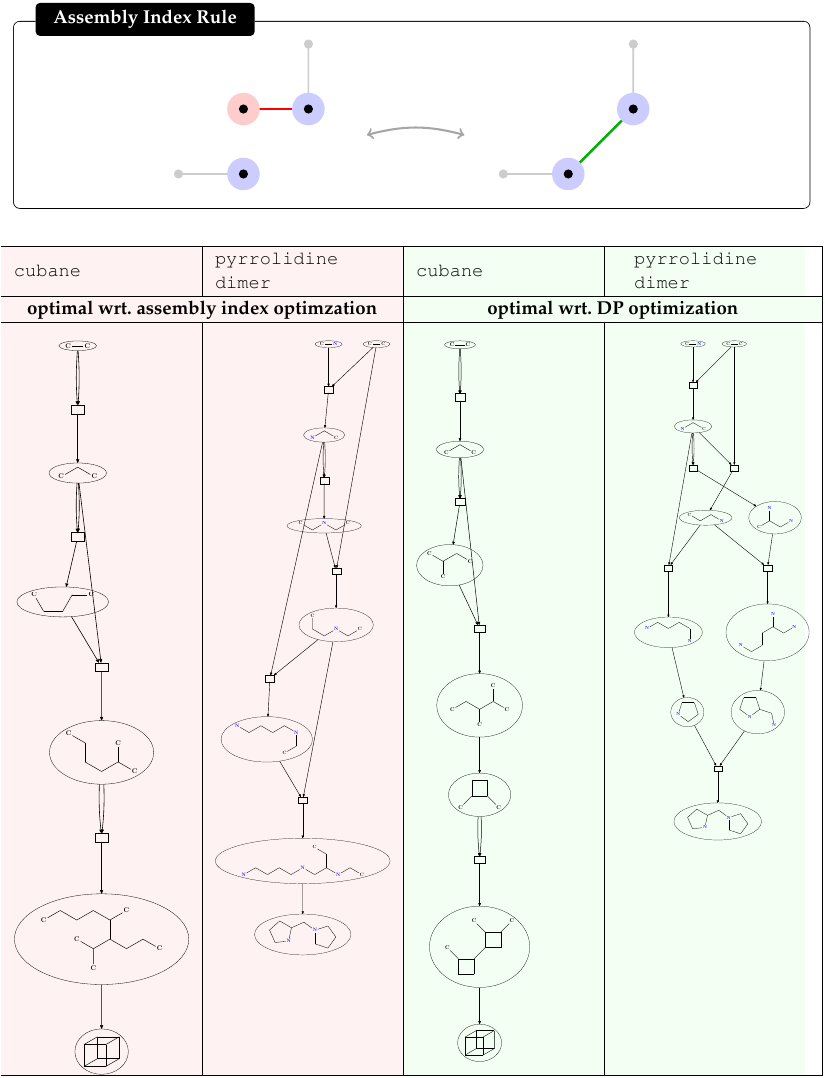}

\caption{\small Synthesis plans for cubane and
  1-(2-Pyrrolidinylmethyl)pyrrolidine (ChemSpider ID: 126001), based on the
  assembly index rule shown at the top. In this rule, a red node and a blue
  node share the same label and are identified for merging; the red node
  and its associated red edge are then removed, and the two blue nodes are
  connected by a new edge. The left two columns (light red) show plans that
  are optimal with respect to assembly index optimization but suboptimal
  with respect to DP optimization, while the right two columns (light
  green) depict plans optimal with respect to DP optimization but
  suboptimal with respect to assembly index optimization. Note, that chains
  of cyclization reactions are represented as a single edge—for example, in
  the leftmost plan, 12 cyclization reactions are bundled into one
  reaction. The assembly indices of the four depicted plans correspond to
  the number of hyperedges reflecting merging steps: 4, 5, 5, and 6,
  respectively.}
\label{tab:daniel}
\end{table}

\begin{table}[p]
\enlargethispage{20pt}
\centering
\includegraphics[width=\textwidth]{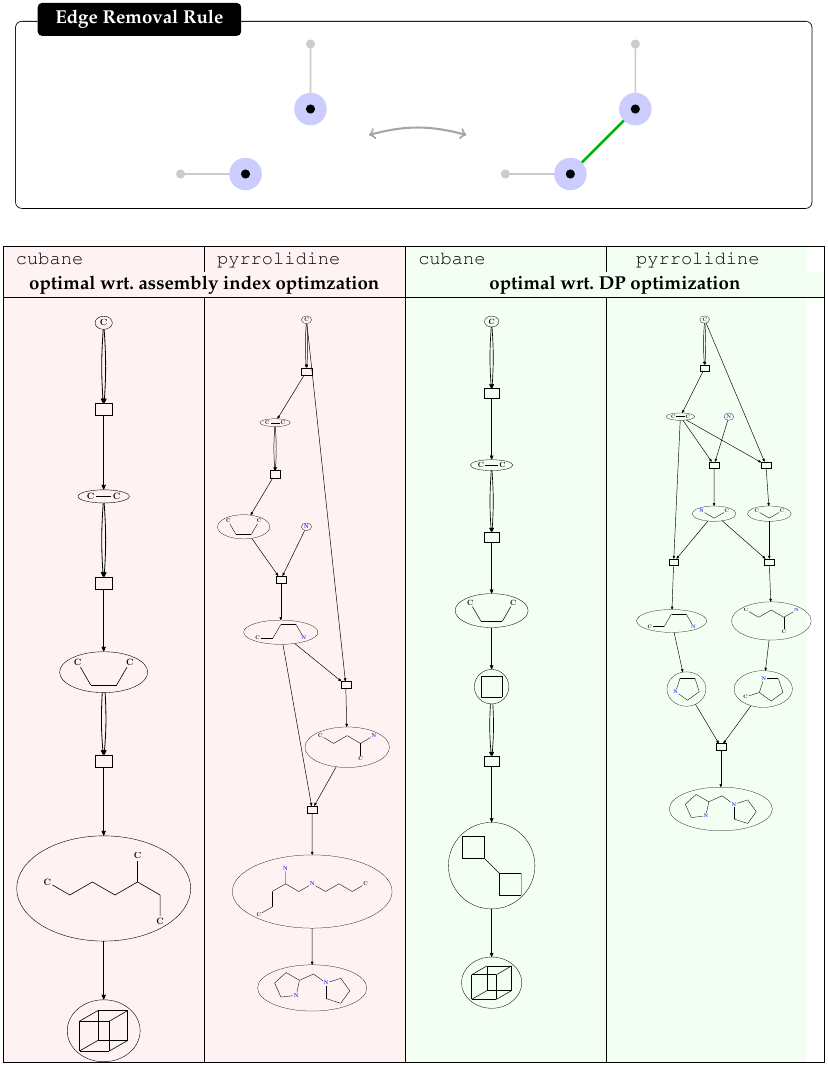}

\caption{\small Synthesis plans for cubane and
  1-(2-Pyrrolidinylmethyl)pyrrolidine (ChemSpider ID: 126001), based on
  edge addition/removal rule. In this rule (shown on top), the two blue
  nodes are simply connected by a new edge. In the left two columns only
  affixations contribute to the objective. Similar to the results in
  Table~\ref{tab:daniel}, not accounting cylizations in the objective
  measure (light red) leads to chemically questionable witnesses, whereas
  accounting for cylization costs --- as done when using the DP approach
  --- leads to chemically more relevant witnesses. Note, that e.g. last
  step depicted with just a single arrow in the left-most synthesis for
  cubane reflects 5 cyclization reactions, which do not contribute to the
  objective value (3 as there are 3 affixation reactions). As can be seen
  in the green columns: an optimal DP solution enforces and earlier
  cyclization. }
\label{tab:daniel2}
\end{table}

\subsection{Reachability} 

The vertex construction and edge insertion rules considered above obviously
can generate all graphs from minimum size building blocks. In the case of
labeled graphs the necessary diversity of labeled building block has to be
supplied in the seed sets $S$. The question whether an assembly pathway
from $S$ to $x$ exists in general boils down to a (usually difficult)
hyperpath problem. Starting from the rules, however, it is already a
difficult problem to determine whether $x$ is reachable from $S$ by a
finite set of rule applications, or converse, whether $x$ can disassembled
completely into elements of $S$. In fact, it is undecidable whether as a DPO
graph rewriting systems terminates \cite{Plump:98}.

To highlight the influence of the rule, we consider a gluing operation that
requires the presence of one or more finite patterns graphs $Q$. 
If two graph $G_1$ and $G_2$ contain $Q$ as an
isomorphic subgraph, we may \emph{glue} $G_1$ and $G_2$ together by
identifying vertices along subgraph isomophisms or graph monomorphisms
$Q\to G_1$ and $Q\to G_2$.  This operation is used e.g.\ in the context of
graph alignments for common induced subgraphs $Q$
\cite{GonzalesLaffitte:23a}. In the setting of graph monomorphisms this
gluing operation can be expressed as the DPO graph rewriting rule
\begin{equation*}
  \gamma_Q \coloneqq (2Q\leftarrow 2Q \rightarrow Q)
\end{equation*}
where $2Q\coloneqq Q\cupdot Q$ is the disjoint union of two copies of $Q$
and the graph monomorphism $\rho$ is given by by $\rho(Q_1)=Q$ and
$\rho(Q_2)=Q$ for two copies of $Q$ in $2Q$.  Note that in the context of
graph transformation, a rule can be applied both to a connected and a
disconnected substrate graph. It is not at all obvious in this case which
graphs are reachable by such a gluing operation. 

In special cases it is possible to guarantee that reachability can be
checked easily. This is in particular the case if there is a graph
invariant $\eta$ with a total order $\lessdot$ such that, for each edge
$(E^-,E^+)$ in $\mathscr{H}$ we have $\eta(y)\lessdot\eta(z)$ for all $y\in
E^-$ and $z\in E^+$. In terms of the rules, we therefore require that the
application of a rule to any substrates increases $\eta$. The total order
may e.g.\ be a lexicographic, as in the case were a reaction either
increases the number of vertices of the cyclomatic number as in the case of
edge insertion. In this scenario it suffices to explore the monotonically
$\lessdot$-decreasing paths originating from $x$ and terminating in some
$z\in S$.

\section{Concluding Remarks}

Assembly theory has been discussed intensively in response to its claim to
providing a unified approach to understanding complexity and evolution in
nature \cite{Sharma:23}. Here, we do not delve into the controversies about
the empirical validity or accuracy of assembly as a measure of complexity,
the degree of causation, or its capability of distinguishing biogenic from
abiogenic molecules \cite{Abel:25,Jaeger:24,Hazen:24}. Instead, we have
analyzed the formal framework of assembly theory as proposed by Cronin,
Walker and collaborators \cite{Marshall:17,Marshall:21,Liu:21,Sharma:23}
and showed that the approach can be understood as a specific case of a
hyperpath problem in directed hypergraphs. This not only links the assembly
index to a broader body of literature on related computational problems but
also leads to generalizations that apply to chemical reaction networks in
particular and graph-transformation systems in general. The assembly index
as defined in \cite{Marshall:22} can indeed be expressed by means of a very
simply graph rewriting rule. The general formal framework introduced here
suggests a multitude of alternative measures that are as natural as the
original definition.

The formal framework introduced here also highlights the close connections
between assembly theory with retrosynthesis planning and computational
approaches in synthesis planning that have been present in the chemical
literature since the seminal work of Vl{\'e}duts \cite{Vleduts:63} and
Corey \cite{Corey:67} in the 1960s. This suggest to consider measures
commonly used in synthesis planning that are closely related to the assembly
index but are considerably easier to compute.

These results suggest a conceptual analogy to the problem of finding a
minimum cycle basis (MCB) \cite{Horton:87,dePina:95}. Historically, cycle
basis identification in chemical graph analysis often relied on spanning
tree heuristics, despite the fact that optimal MCB identification based on
a spanning tree (i.e.\ finding an optimal Kirchhoff fundamental cycle basis)
is NP-complete \cite{dePina:95}. A more efficient approach, polynomial-time
solvable, allows MCBs to be computed without relying on spanning
trees. Similarly, in synthesis planning, the choice of expansion rules and
objective functions significantly influences solution quality. The DP-based
approach described here is computationally more efficient. It is more
closely related to the ``split-branched assembly index'' used in
\cite{Marshall:21,Marshall:22}. More importantly, it yields more chemically
meaningful solutions than assembly index-based methods.  Future research
thus could explore hybrid strategies that leverage both techniques to
refine synthesis pathway predictions further.

\paragraph{Acknowledgements.} This work was funded by the Novo Nordisk Foundation as part of MATOMIC
  (0066551). PFS acknowledges the financial support by the Federal Ministry
of Education and Research of Germany (BMBF) through DAAD project 57616814
(SECAI, School of Embedded Composite AI), and jointly with the
S{\"a}chsische Staatsministerium f{\"u}r Wissenschaft, Kultur und Tourismus
in the programme Center of Excellence for AI-research \emph{Center for
Scalable Data Analytics and Artificial Intelligence Dresden/Leipzig},
project identification number: SCADS24B.


\vskip2pc

\bibliographystyle{abbrv}
\bibliography{assx}

\newpage

\section{Appendix}

This appendix introduces a graph transformation rule that expands the set
of all compounds reachable through iterative rule application until
convergence. Based on the resulting directed hypergraph, we present a
simple ILP formulation to identify witnesses with an optimal molecular
assembly index by minimizing inverse affixations. An iterative approach for
enumerating all such optimal witnesses is straightforward: rerun the ILP
while adding constraints that forbid any solutions already found. Using
this method, we can characterize all witnesses that achieve an optimal
molecular assembly index and, for example, demonstrate that none of these
witnesses also attains optimality with respect to more chemically motivated
objective functions, as discussed in the main body of the paper.

\subsection{DPO diagrams}
In Figure \ref{fig:dpo} we depict the Double Pushout (DPO) diagram for the
rule that was used to model the inverse affixation and the inverse
cyclization as used by Cronin, Walker and colleagues. Note that the
rule application might either split a compound into two compounds (inverse
affixation) or break a ring (inverse cyclization). In the top span $\_X$ and
$\_Y$ are variables that can match any atom type. To reduce clutter we omit
the variable of the bond type, but any bond type can match the bond between
atom $\_X$ and $\_Y$.

\begin{figure}[h]
  \centering
  \includegraphics[width=0.9\linewidth]{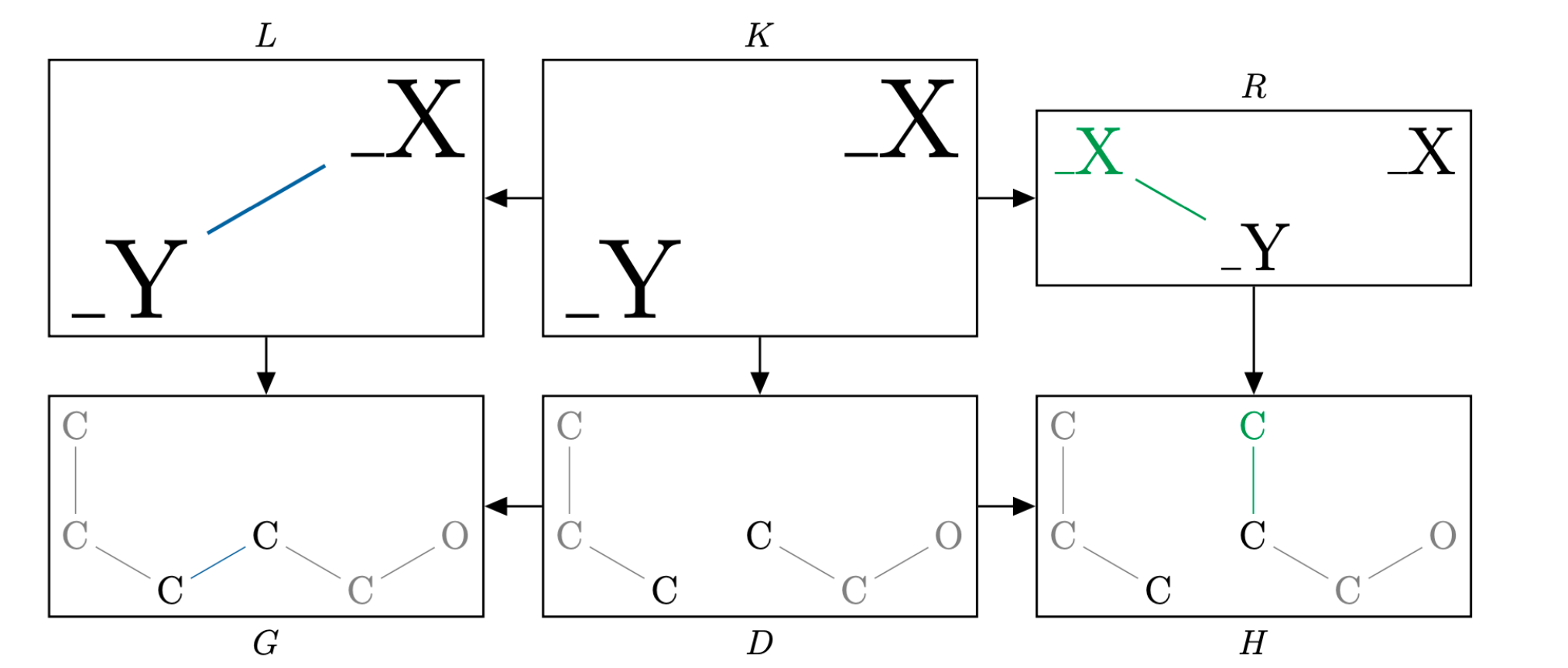}
  \caption{DPO diagram, DPO Rule (top span) and an exemplification of its
    application to butanol - butanol is split into two compounds, leading
    to an inverse affixation (however, note, that the number of carbons
    does not remain invariant); the rule is the only rule needed to expand
    the chemical space which is expanded to find optimal solutions w.r.t.
    optimal assembly index}
        \label{fig:dpo}
    \end{figure}

\subsection{ILP Formulation}
In the following we describe an ILP approach to solve for the optimal
molecular assembly index as described by Cronin, Walker and colleagues.

\noindent
Let $\mathscr{H} = (V,E)$ be a directed hypergraph built via repeated
application of the DPO rule from Figure~\ref{fig:dpo}, ensuring that all
target compounds have at most two vertices. Each vertex $v \in V$
represents a compound in this construction; each hyperedge $e \in E$ stands
for an inverse reaction step (either an inverse cyclization or an inverse
affixation). We assign to every hyperedge $e$ a binary weight $w_e \in
\{\,0,1\}$, where $w_e = 0$ indicates an inverse cyclization and $w_e = 1$
indicates an inverse affixation.

We introduce a binary variable $y_v \in \{0,1\}$ for each vertex $v\in V$,
where $y_v = 1$ means $v$ is chosen in the ``optimal witness.'' Likewise,
for each hyperedge $e\in E$, we introduce a binary variable $x_e \in
\{0,1\}$, indicating whether $e$ is used in the solution. We distinguish a
special goal vertex $g \in V$ corresponding to the target compound, and fix
$y_g = 1$.

Our objective is to minimize the total count of inverse affixations (where
$w_e=1$) and also prefer fewer total hyperedges. Thus we set
\begin{equation*}
\min\;\; \sum_{e\in E} w_e.
\end{equation*}
In order to select for simpler optimal witnesses, we may use the following
cost function:
\begin{equation*}
\min\;\; \sum_{e\in E}\bigl(1000\,w_e \;-\;1\bigr)\,x_e.
\end{equation*}
Here $w_e=1$ for inverse affixations and $w_e = 0$ for inverse
cyclizations. The $-1$ favors smaller sets of hyperedges among equally good
solutions in terms of affixations.

We enforce a collection of constraints that ensure consistency. First, for
each vertex $v$ we denote by $E_v^{\mathrm{out}}$ the hyperedges for which
$v$ is a source. We require
\begin{equation*}
\sum_{e\in E_v^{\mathrm{out}}} x_e \;=\; y_v, 
\quad
\sum_{e\in E_v^{\mathrm{out}}} x_e \;\le\; 1,
\quad
y_v \;\ge\; x_e\;\;\forall\,e\in E_v^{\mathrm{out}},
\end{equation*}
so that each chosen vertex has exactly one outgoing hyperedge in the
witness, and we never pick more than one hyperedge leaving the same
vertex. If $x_e = 1$, the source vertex must also be chosen ($y_v \ge
x_e$). Next, for each vertex $v$ with set $E_v^{\mathrm{in}}$ of incoming
hyperedges, we impose
\begin{equation*}
\sum_{e\in E_v^{\mathrm{in}}} x_e \;\ge\; y_v,
\end{equation*}
ensuring that if a vertex is in the witness, it is supplied by at least one
incoming hyperedge. We also have pairs $(e,v)$ where $v$ must be active if
$e$ is chosen, enforced by $y_v \ge x_e$. Finally, the condition $y_g = 1$
fixes the goal vertex as part of the solution. Altogether, this integer
linear program encodes a minimum-affixation hyperpath from the initial
sources to the goal compound, with cyclizations and affixations as
described by $E$.

\end{document}